\newtheorem{pp}{Proposition}
\newtheorem{corollary}{Corollary}
\begin{document}

\title{Robust Beamforming and Jamming for Enhancing the Physical Layer Security of Full Duplex Radios}

\author{Zhengmin Kong, Shaoshi Yang,~\IEEEmembership{Senior Member,~IEEE}, Die Wang, Lajos~Hanzo,~\IEEEmembership{Fellow,~IEEE}

\thanks{This work is financially supported by the National Natural Science Foundation of China (NSFC) (Grant No. 61801518) and the Hubei Provincial Natural Science Foundation of China (Grant No. 2017CFB661). (\textit{Corresponding author: Shaoshi Yang})

Z. Kong and D. Wang are with the Automation Department,School of Electrical Engineering and Automation, Wuhan University, Wuhan 430072, China (e-mail:
zmkong@whu.edu.cn, wangdie1995@whu.edu.cn).

S. Yang is with the School of Information and Communication Engineering, Beijing University of Posts and Telecommunications, and with the Key Laboratory of Universal Wireless Communications, Ministry of Education,  Beijing 100876, China (e-mail: shaoshi.yang@ieee.org).

L. Hanzo is with the School of Electronics and Computer Science, University of Southampton, Southampton SO17 1BJ, U.K. (e-mail: lh@ecs.soton.ac.uk)}
\thanks{}}

\markboth{}
{}

\maketitle

\begin{abstract}
In this paper, we investigate the physical layer security of a full-duplex base station (BS) aided system in the worst case, where an uplink transmitter (UT) and a downlink receiver (DR) are both equipped with a single antenna, while a powerful eavesdropper is equipped with multiple antennas. For securing the confidentiality of signals transmitted from the BS and UT, an artificial noise (AN) aided secrecy beamforming scheme is proposed, which is robust to the realistic imperfect state information of both the eavesdropping channel and the residual self-interference channel. Our objective function is that of maximizing the worst-case sum secrecy rate achieved by the BS and UT, through jointly optimizing the beamforming vector of the confidential signals and the transmit covariance matrix of the AN. However, the resultant optimization problem is non-convex and non-linear. In order to efficiently obtain the solution, we transform the non-convex problem into a sequence of convex problems by adopting the block coordinate descent algorithm. We invoke a linear matrix inequality for finding its Karush-Kuhn-Tucker (KKT) solution. In order to evaluate the achievable performance, the worst-case secrecy rate is derived analytically. Furthermore, we construct another secrecy transmission scheme using the projection matrix theory for performance comparison. Our simulation results show that the proposed robust secrecy transmission scheme achieves substantial secrecy performance gains, which verifies the efficiency of the proposed method.

\end{abstract}

\begin{IEEEkeywords}
Physical layer security, full-duplex, artificial noise, secrecy beamforming, secrecy rate
\end{IEEEkeywords}

\IEEEpeerreviewmaketitle

\section{Introduction}
Full-duplex (FD) communication has the potential of doubling the
spectral efficiency compared to its half-duplex counterpart. The main
challenge facing FD systems is the increased self-interference imposed
by the leakage from the transmitted signals to the signals received at
the FD node. Sophisticated techniques, such as TX/RX antenna
separation and isolation, as well as digital/analog/propagation domain
interference cancellation, have been proposed for combating the
self-interference \cite{FD_radio, InbandFullduplex}. Thus the design
of a single-channel FD system becomes feasible.

On the other hand, exploiting artificial noise (AN) \cite{AN} has been
widely recognized as an effective technique of improving the physical
layer security (PLS) \cite{Yang_JSAC_2018, Yang_JSAC, Kong_TIFS,
  Cao_TIFS, Zou_TVT_2014, Zou_TVT_2017} of wireless communications. By
transmitting AN in addition to the confidential signals, the quality
of the eavesdropping channel is degraded. However, existing AN based
PLS schemes cannot be used for securing uplink (UL) transmissions in
conventional FDD/TDD cellular systems \cite{Jin_uplink_secrecy,
  Jin_uplink_secrecy_scheduling}, because i) the base stations (BSs)
cannot transmit AN while receiving co-channel desired signals; ii) an
UL transmitter (UT) typically has a single transmit antenna that can
only be used for transmitting desired information-bearing signals to a
BS, and no more antennas are available for transmitting AN. Notably,
FD techniques provide a solution to such a dilemma, since an FD
transceiver is capable of receiving the confidential signals, while
simultaneously transmitting AN in order to jam potential eavesdroppers
(Eves). The authors of \cite{FD_statistics_CSI} proposed a user-grouping-based fractional time model relying either on perfect channel state information (CSI) or on statistical CSI. Assuming perfect self-interference suppression, the authors of \cite{FullDuplexBaseStation} studied the joint design of the confidential signal and AN for maximizing the instantaneous secrecy rate achieved by the legitimate downlink (DL) receiver (DR), while keeping the received secrecy rate at the FD BS above a predefined target.  By contrast, assuming imperfect self-interference
cancellation, the authors of \cite{FullDuplexBaseStation2} studied the
joint design of the confidential signal and AN for minimizing the
total transmit power, while keeping the achievable secrecy rate of the
FD BS and of the DR above a predefined target. Furthermore, the authors of \cite{FD_relay_perfectCSI} proposed a pair of relay-assisted secure protocols by relying on the FD capability of an orthogonal frequency division multiple access (OFDMA) system.

However, none of the aforementioned contributions considered the scenario of realistic imperfect CSI, even though only imperfect eavesdropping channel information might be gleaned by the BS - if any at all - since typically the Eves sends no training signals to the BS.

Although the authors of \cite{FD_Relay_incomplete_ECSI,RobustSecrecy} analyzed the secrecy outage probability in an imperfect eavesdropper CSI knowledge scenario for a FD relay system, the eavesdropper was only equipped with a single antenna and the source was unable to communicate with the destination directly.

Hence, more realistic practical problems have to be considered. More specifically, the security in the DL and UL have to be considered simultaneously, and the sum secrecy rate of the DL and UL should be optimized in the face of a ``sophisticated/strong'' eavesdropper having imperfect CSI\footnote{The eavesdropper is said to be ``sophisticated/strong'', for example when it has multiple antennas or when the number of eavesdropper antennas is higher than that of each legitimate UT and DR, or alternatively, when it sends no training signals to the BS.}.

In contrast to \cite{FD_statistics_CSI, FullDuplexBaseStation, FullDuplexBaseStation2, FD_relay_perfectCSI,  FD_Relay_incomplete_ECSI}, in this paper, our new contribution is that we jointly design the beamforming vector of the confidential signal and the covariance matrix of the AN, in order to maximize the worst-case\footnote{In this treatise, we consider the worst-case scenario, where the UT and DR are equipped with a single antenna, while the ``sophisticated/strong'' eavesdropper is equipped with multiple antennas. Furthermore, when only imperfect eavesdropping channel knowledge is available to the BS. In such cases, the capacity of the eavesdropper is potentially higher than that of the legitimate user. Naturally, this scenario is more challenging than the case where the legitimate users have the same capability as the eavesdropper.} sum secrecy rate achieved by the FD BS and the UT, under the realistic assumption that only imperfect state information of the residual self-interference channel and of the eavesdropping channel is available to the FD BS.

To the best of our knowledge, there is no open literature addressing the simultaneous optimization of the sum secrecy rate of the DL and UL in the face of realistic CSI error concerning a ``sophisticated/strong'' eavesdropper's channel in the above-mentionedf worst-case FD scenario. Since the worst-case optimization problem is considered, in the paper, we adopt the deterministic model of\cite{RFullDuplex,Huang_TSP} to characterize the imperfect channel state information (CSI).  Because the resultant objective function considered is non-convex, the global optimum is hard to obtain. As a low-complexity suboptimal method, the block coordinate descent (BCD) algorithm of \cite[Section 2.7]{NolinearProgramming} is adopted to transform the non-convex and non-linear problem into a sequence of convex problems. In other words, without using any approximation method such as the classic semi-definite relaxation (SDR) technique, we transform our non-convex and non-linear problem into a more well-behaved form and conceive an efficient algorithm for finding its Karush-Kuhn-Tucke(KKT) solution. Then the locally optimal solution of the original problem is obtained. Our analysis and simulation results reveal valuable insights into how several relevant design parameters affect the achievable secrecy performance of the system considered; compared to projection matrix theory, the proposed robust secrecy transmission scheme is capable of achieving substantial secrecy performance gains.

The rest of this paper is organized as follows. In Section II, the system model and the problem formulation are described. In Section III, we transform the original non-convex problem into a sequence of convex problems and invoke an iterative BCD algorithm for finding its Karush-Kuhn-Tucke(KKT) solution. Our simulation results are provided in Section IV, and the conclusions are offered in Section V.

\emph{Notation:} $(\cdot)^T$, $(\cdot)^*$ and $(\cdot)^H$ denote the transpose, conjugate and conjugate transpose, respectively.  $(\cdot)^{-1}$, $\mathrm{tr}(\cdot)$, and $||\cdot||_F$ denote the inverse, trace, and Frobenius norm of a matrix, respectively. $\ln(\cdot)$ denotes the natural logarithm, and $\otimes$ represents the Kronecker product.
\section{System Model and Problem Formulation}
\subsection{Transmission Model}
\begin{figure}[!t]
\centering
\includegraphics[width=1.8in]{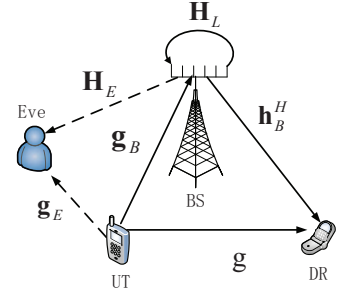}
\caption{A cellular system composed of a multi-antenna FD BS, a single-antenna UT, a single-antenna DR, and a multi-antenna Eve.}
\label{SystemModel}
\end{figure}
We consider an FD cellular system composed of an FD BS, a single-antenna legitimate DR, a single-antenna legitimate UT and a powerful multi-antenna Eve, as illustrated in Fig. \ref{SystemModel}. The UT and BS are scheduled for UL and DL transmission respectively, in the same frequency band.
We assume that the Eve has $N_E$ receiving antennas, while the FD BS has $N_T$ transmitting antennas and $N_R$ receiving antennas to facilitate simultaneous transmission and reception.  Additionally, we assume $N_T\geq N_E+1$, so that the BS has sufficient degrees of freedom to transmit the AN.
As shown in Fig. \ref{SystemModel}, the DL channels from the BS to the DR and to the Eve are denoted by
$\mathbf{h}^H_{B} \in\mathbb{C}^{1 \times N_T}$ and $\mathbf{H}_{E} \in\mathbb{C}^{N_E\times N_T}$, respectively, while the UL channels from the UT to the BS, the Eve and the DR are denoted by $\mathbf{g}_B \in\mathbb{C}^{N_R\times 1}$,
$\mathbf{g}_E \in\mathbb{C}^{N_E\times 1}$, and $g$, respectively.
Finally, $\mathbf{H}_L \in\mathbb{C}^{N_R\times N_T}$ denotes the residual self-interference channel due to the imperfect self-interference cancellation \cite{InbandFullduplex}.

In the DL transmission, the BS transmits the confidential information-bearing signals together with the AN to protect the confidential information from wiretapping. The baseband signal from the BS can be expressed as
\begin{align}
\mathbf{x}_b\triangleq \mathbf{v} s + \mathbf{n}_a,
\end{align}
where ${s}\sim\mathcal{CN}\left(0,1\right)$\footnote{This assumption is widely adopted in the studies on physical layer security, which facilitates the calculation of the secrecy rate\cite{FullDuplexBaseStation, FullDuplexBaseStation2, RFullDuplex, Huang_TSP}.} is the confidential information-bearing signal, $\mathbf{v}\in\mathbb{C}^{N_T\times 1}$ denotes the beamforming vector, and $\mathbf{n}_a\sim\mathcal{CN}\left(\mathbf{0},\mathbf{\Omega}\right)$ denotes the AN.  The total available power of the BS is denoted by $P_{\mathrm{tot}}$, while the design of $\mathbf{v}$ and $\mathbf{n}_a$ must satisfy the following power constraint:
\begin{align}
\mathrm{tr}\left(\mathbf{v}\mathbf{v}^H+\mathbf{\Omega}\right)\leq P_{\mathrm{tot}}.
\end{align}

Let us assume $z\sim\mathcal{CN}(0,1)$ is the confidential signal transmitted from the UT and its transmit power is $P_t$. In a scheduled slot, the signals received at the BS, the DR and the Eve can be expressed as
\begin{align}
&y_B=\mathbf{r}^H\left(\sqrt{P_t}\mathbf{g}_Bz+\mathbf{H}_L\mathbf{x}_b+\mathbf{n}_B\right),
\label{yB}\\
&{y}_D=\mathbf{h}^H_B\mathbf{x}_b+\sqrt{P_t}gz+{n}_D,
\label{yD}\\
&\mathbf{y}_E=\mathbf{H}_E\mathbf{x}_b+\sqrt{P_t}\mathbf{g}_Ez+\mathbf{n}_E,
\label{yE}
\end{align}
respectively, where $\mathbf{n}_B$, $n_D$, $\mathbf{n}_E$ denote the received noise having a zero mean and unit variance/identity covariance matrix. Since the perfect state information of $\mathbf{H}_L$ is unavailable to the BS, the maximum ratio combining (MRC) receiver, instead of the minimum mean-square error (MMSE) receiver, is assumed to be adopted at the BS for maximizing the signal-to-noise ratio (SNR) of the received confidential signal. The MRC receiver adopted is defined by $\mathbf{r}=\frac{\mathbf{g}_B}{||\mathbf{g}_B||_F}$, hence we have $||\mathbf{r}||_F = 1$.

\subsection{Channel State Information Model}
We consider the quasi-stationary flat-fading channels and assume that the UL/DL pair of channels exhibits reciprocity. At the beginning of the scheduled slot, the BS obtains the CSI of all legitimate channels via training. In practice, the DR broadcasts training sequences to facilitate the UL channel estimation, thus $\mathbf{h}_B^H$ may be estimated at the BS and the inter-terminal interference channel $g$ may be estimated at the UT. Then, the UT can embed $g$ into its own training sequence and broadcast the composite sequence. As a result, both $g$ and $\mathbf{g}_B$ can be acquired at the BS. By contrast, the residual self-interference channel $\mathbf{H}_L$ after adopting digital/analog domain cancellation techniques remains unknown \cite{InbandFullduplex}. Additionally, due to the lack of explicit cooperation between the legitimate network nodes and the Eve, perfect knowledge of the wiretap channels is difficult to obtain at the legitimate network nodes. Hence, we assume that the BS can only acquire imperfect estimates of $\mathbf{H}_E$ and $\mathbf{g}_E$, and we have
\begin{align}
&\mathbf{H}_E=\bar{\mathbf{H}}_E+\Delta\mathbf{H}_E,\quad\mathbf{g}_E=\bar{\mathbf{g}}_E+\Delta \mathbf{g}_E,
\end{align}
where $\bar{\mathbf{H}}_E$ and $\bar{\mathbf{g}}_E$ denote the estimates of $\mathbf{H}_E$ and $\mathbf{g}_E$, respectively, while $\Delta\mathbf{H}_E$ and $\Delta \mathbf{g}_E$ denote the corresponding estimation error \cite{channel_estimate_error1, channel_estimate_error2, channel_estimate_error3, channel_estimate_error4, channel_estimate_error5}. This model is sufficiently generic for characterizing different types of  wiretap channels. More specifically, when the Eve is constituted by an active subscriber, the perfect CSI of the wiretap channel will be available to the associated network nodes, indicating $\Delta\mathbf{H}_E=\mathbf{0}$ and $\Delta \mathbf{g}_E=\mathbf{0}$. On the other hand, when the Eve is completely passive, the BS and UT may know nothing about the CSI of the wiretap channels, implying that $\bar{\mathbf{H}}_E$ and $\bar{\mathbf{g}}_E$ can take arbitrary values.

We characterize the uncertainty associated with the CSI estimation and residual self-interference channel by the widely adopted
\emph{deterministic uncertainty model} \cite{RFullDuplex,Huang_TSP}. In this model, $\mathbf{H}_L$, $\Delta\mathbf{H}_E$, and $\Delta \mathbf{g}_E$ are bounded by the sets
$\xi_{\mathbf{H}_L}\triangleq\left\{\mathbf{H}_L: ||\mathbf{H}_L||_F\leq \delta_{\mathbf{H}_L}\right\}$, $\xi_{\mathbf{H}_E}\triangleq\left\{\Delta\mathbf{H}_E: ||\Delta\mathbf{H}_E||_F\leq \delta_{\mathbf{H}_E}\right\}$ and
$\xi_{\mathbf{g}_E}\triangleq\left\{\Delta \mathbf{g}_E: ||\Delta \mathbf{g}_E||_F\leq \delta_{\mathbf{g}_E}\right\}$, respectively, where $\delta_{\mathbf{H}_L}$, $\delta_{\mathbf{H}_E}$ and $\delta_{\mathbf{g}_E}$ are some given constants. Relying on this deterministic model, we optimize the system's secrecy performance under the worst-case channel conditions. This approach guarantees the absolute robustness of the system design, since the achievable secrecy performance would be no worse than that of the worst case.


\subsection{Problem Formulation}
We seek to optimize the beamformer $\mathbf{v}$ and the covariance matrix $\mathbf{\Omega}$ of AN in order to maximize the worst-case sum secrecy rate. Defining $\mathbf{Q}=\mathbf{v}\mathbf{v}^H$, we can obtain
\begin{align}
&\max_{\mathbf{v},\mathbf{\Omega}\succeq\mathbf{0}}\min_{\substack{\Delta\mathbf{H}_E\in\xi_{\mathbf{H}_E},
\mathbf{H}_L\in\xi_{\mathbf{H}_L},
\Delta\mathbf{g}_E\in\xi_{\mathbf{g}_E}
}}\mathrm{ln}\left(1+\eta_1\right)
+\mathrm{ln}\left(1+\eta_2\right)
\nonumber\\
&\qquad\qquad\qquad\qquad\qquad\qquad\qquad-\ln\mathrm{det}\left(\mathbf{I}_{N_E}+\mathbf{Z}\mathbf{N}^{-1}\right)
\nonumber\\
&\mathrm{s.t.}\quad\quad\quad\mathrm{rank}\left(\mathbf{Q}\right)=1,
\nonumber\\
&\quad \quad \quad \quad \; \mathrm{tr}\left(\mathbf{Q}+\mathbf{\Omega}\right)\leq P_{\mathrm{tot}}, 
\label{worstcaseDesign}
\end{align}
where $\eta_1\triangleq \frac{\mathbf{h}_B^H\mathbf{Q}\mathbf{h}_B}{P_t|g|^2+\mathbf{h}_B^H\mathbf{\Omega}\mathbf{h}_B+1}$,
$\eta_2\triangleq \frac{P_t||\mathbf{g}_B||_F^2}{\mathbf{r}^H\mathbf{H}_L\left(\mathbf{Q}+\mathbf{\Omega}\right)\mathbf{H}_L^H\mathbf{r}+1}$,
$\mathbf{Z}\triangleq \mathbf{H}_E\mathbf{Q}\mathbf{H}_E^H+P_t\mathbf{g}_E\mathbf{g}_E^H$,
$\mathbf{N}\triangleq\mathbf{H}_E\mathbf{\Omega}\mathbf{H}_E^H+\mathbf{I}_{N_E}$.

\section{Robust Joint Beamforming and AN Design: An Iterative Approach}
The max-min problem (\ref{worstcaseDesign}) is non-convex.  In the following, let us transform (\ref{worstcaseDesign}) into a sequence of convex programming problems.

Employing \cite[Proposition 1]{RFullDuplex} and \cite[Lemma 4.1]{InterativeAlgorithm}, we obtain the following equivalent relation:
\begin{align}
&-\ln\left(P_t|g|^2+\mathbf{h}_B^H\mathbf{\Omega}\mathbf{h}_B+1\right)=\nonumber\\
&\max_{a_1>0}\left[-a_1\left(P_t|g|^2+\mathbf{h}_B^H\mathbf{\Omega}\mathbf{h}_B+1\right)+\ln a_1+1\right],
\label{equivalentequation1}\\
&-\ln\left[1+\mathbf{r}^H\mathbf{H}_L\left(\mathbf{v}\mathbf{v}^H+\mathbf{\Omega}\right)\mathbf{H}_L^H\mathbf{r}\right]=\nonumber\\
&\max_{a_2>0}\left\{-a_2\left[1+\mathbf{r}^H\mathbf{H}_L\left(\mathbf{v}\mathbf{v}^H+\mathbf{\Omega}\right)\mathbf{H}_L^H\mathbf{r}\right]+\ln a_2+1\right\},
\label{equivalentequation2}\\
&-\ln \det\left[\mathbf{I}_{N_E}+\mathbf{H}_E\left(\mathbf{v}\mathbf{v}^H+\mathbf{\Omega}\right)\mathbf{H}_E^H+P_t\mathbf{g}_E\mathbf{g}_E^H\right]=\nonumber\\
&N_E+\max_{\mathbf{W}_E\succeq\mathbf{0}}\left[\ln\det\left(\mathbf{W}_E\right)-\mathrm{tr}\left(\mathbf{W}_E\mathbf{X}\right)\right],\label{equivalentequation3}
\end{align}
where $\mathbf{X}\triangleq\mathbf{I}_{N_E}+\mathbf{H}_E\left(\mathbf{v}\mathbf{v}^H+\mathbf{\Omega}\right)\mathbf{H}_E^H+P_t\mathbf{g}_E\mathbf{g}_E^H$. Then, expanding the objective function of (\ref{worstcaseDesign})  and invoking the above relationship, we have
\begin{subequations}\label{ReformulatedProblem1}
\begin{align}
 \max_{\substack{a_1>0,a_2>0,\mathbf{W}_E\succeq\mathbf{0}\\
\mathbf{Q}\succeq\mathbf{0},\mathbf{\Omega}\succeq\mathbf{0}}}& \Xi\left(a_1,a_2,\mathbf{W}_E,\mathbf{Q},\mathbf{\Omega}\right)
\\
\mathrm{s.t.}\quad\quad \; & \mathrm{rank}\left(\mathbf{Q}\right)=1,
\\
& \mathrm{tr}\left(\mathbf{Q}+\mathbf{\Omega}\right)\leq P_{\mathrm{tot}}, 
 \end{align}
\end{subequations}
where the objective function (\ref{ReformulatedProblem1}a) is given in the equation (\ref{Ffunction12}).
\begin{figure*}[!t]
\begin{align}
&\Xi\left(a_1,a_2,\mathbf{W}_E,\mathbf{Q},\mathbf{\Omega}\right)\triangleq
\mathrm{ln}\left(\mathbf{h}_B^H\mathbf{v}\mathbf{v}^H\mathbf{h}_B+P_t|g|^2+\mathbf{h}_B^H\mathbf{\Omega}\mathbf{h}_B+1\right)
 -a_1\left(P_t|g|^2+\mathbf{h}_B^H\mathbf{\Omega}\mathbf{h}_B+1\right)+\ln a_1+1+\nonumber\\
 &\mathrm{ln}\left({P_t||\mathbf{g}_B||_F^2+\mathbf{r}^H\mathbf{H}_L\left(\mathbf{v}\mathbf{v}^H+\mathbf{\Omega}\right)\mathbf{H}_L^H\mathbf{r}+1}\right)
 -a_2\left[1+\mathbf{r}^H\mathbf{H}_L\left(\mathbf{v}\mathbf{v}^H+\mathbf{\Omega}\right)\mathbf{H}_L^H\mathbf{r}\right]+\ln a_2+1-N_E+\ln\det\left(\mathbf{W}_E\right)\nonumber\\
&-\mathrm{tr}\left(\mathbf{W}_E\mathbf{X}\right)
+\mathrm{lndet}\left(\mathbf{I}_{N_E}+\mathbf{H}_E\mathbf{\Omega}\mathbf{H}_E^H\right)
\label{Ffunction12}
\end{align}
\hrulefill
\end{figure*}
\newcounter{TempEqCnt} 
\setcounter{TempEqCnt}{\value{equation}} 
\setcounter{equation}{16} 
\begin{figure*}[!t]
\begin{align}
&F\left(a_1,a_2,\alpha,\beta,\gamma,\mathbf{W}_E,\mathbf{Q},\mathbf{\Omega}\right)\triangleq\mathrm{ln}\left(1+P_t|g|^2+\mathbf{h}_B^H\left(\mathbf{Q}+\mathbf{\Omega}\right)\mathbf{h}_B\right)-a_1\left(P_t|g|^2+\mathbf{h}_B^H\mathbf{\Omega}\mathbf{h}_B+1\right)+\ln a_1+1
\nonumber\\
&-a_2\left(1+\alpha\right)+\ln a_2 +1
+\mathrm{ln}\left(1+P_t||\mathbf{g}_B||_F^2+\alpha\right)
+\ln\det\left(\mathbf{W}_E\right)-\mathrm{tr}\left(\mathbf{W}_E\right)-\beta-P_t\gamma
+\ln \det\left(\mathbf{I}_{N_E}+\mathbf{M}\right)\label{Ffunction}
\end{align}
\hrulefill
\end{figure*}
\setcounter{equation}{\value{TempEqCnt}}
\newcounter{TempEqNo} 
\setcounter{TempEqNo}{\value{equation}} 
\setcounter{equation}{18} 
\begin{figure*}[!t]
\begin{align}
\lambda_{\beta}\left[\begin{matrix}
\mathbf{I}_{N_T\times N_E}&\mathbf{0}\\
\mathbf{0}&-\delta^2_{\mathbf{H}_E}
\end{matrix}
\right]-
\left[\begin{matrix}
\left(\mathbf{Q}+\mathbf{\Omega}\right)^T\otimes\mathbf{W}_E&\left(\left(\mathbf{Q}+\mathbf{\Omega}\right)^T\otimes\mathbf{W}_E\right)\mathrm{vec}\left(\bar{\mathbf{H}}_E\right)\\
\mathrm{vec}\left(\bar{\mathbf{H}}_E\right)^H\left(\left(\mathbf{Q}+\mathbf{\Omega}\right)^T\otimes\mathbf{W}_E\right)&
\mathrm{vec}\left(\bar{\mathbf{H}}_E\right)^H\left(\left(\mathbf{Q}+\mathbf{\Omega}\right)^T\otimes\mathbf{W}_E\right)\mathrm{vec}\left(\bar{\mathbf{H}}_E\right)-\beta
\end{matrix}
\right]\succeq\mathbf{0},\exists \lambda_{\beta}\geq 0.\label{equivalentconstraint2}
\end{align}
\hrulefill
\end{figure*}
\setcounter{equation}{\value{TempEqNo}}

By introducing optimization variables $\alpha,\beta,\gamma$, the problem (\ref{ReformulatedProblem1}) can be reformulated as
\begin{subequations}
\begin{align}
&\max_{\substack{a_1>0,a_2>0,\mathbf{W}_E\succeq\mathbf{0}\\
\alpha>0,\beta>0,\mathbf{Q}\succeq\mathbf{0},\mathbf{\Omega}\succeq\mathbf{0}}}F\left(a_1,a_2,\alpha,\beta,\gamma,\mathbf{W}_E,\mathbf{Q},\mathbf{\Omega}\right)
\nonumber\\
\mathrm{s.t.} \quad &\alpha \geq \mathbf{r}^H\mathbf{H}_L\left(\mathbf{Q}+\mathbf{\Omega}\right)\mathbf{H}_L^H\mathbf{r},\quad  \forall\mathbf{H}_L\in\xi_{\mathbf{H}_L};\label{Newconstraint1}\\
&\beta\geq \mathrm{tr}\left(\mathbf{W}_E\left(\mathbf{H}_E\left(\mathbf{Q}+\mathbf{\Omega}\right)\mathbf{H}_E^H\right)\right),\; \forall \Delta\mathbf{H}_E\in\xi_{\Delta\mathbf{H}_E};\label{Newconstraint2}\\
&\gamma\geq \mathrm{tr}\left(\mathbf{W}_E\left(P_t\mathbf{g}_E\mathbf{g}_E^H\right)\right),\quad\forall \Delta\mathbf{g}_E\in\xi_{\Delta\mathbf{g}_E};\label{Newconstraint3}\\
&\mathbf{M}\preceq \mathbf{H}_E\mathbf{\Omega}\mathbf{H}_E^H \quad \forall \Delta\mathbf{H}_E\in\xi_{\Delta\mathbf{H}_E};\label{Newconstraint4}\\
&\mathrm{rank}\left(\mathbf{Q}\right)=1;\label{Newconstraint5}
\\
& \mathrm{tr}\left(\mathbf{Q}+\mathbf{\Omega}\right)\leq P_{\mathrm{tot}},\label{NewPowerConstraint}
\end{align}
\end{subequations}
where the objective function $F\left(a_1,a_2,\alpha,\beta,\gamma,\mathbf{W}_E,\mathbf{Q},\mathbf{\Omega}\right)$ is defined in  (\ref{Ffunction}).

Let us first transform the constraint (\ref{Newconstraint1}) into a more convenient formulation:
\begin{align}
&\mathbf{r}^H\mathbf{H}_L\left(\mathbf{Q}+\mathbf{\Omega}\right)\mathbf{H}_L^H\mathbf{r}
=\mathrm{tr}\left(\mathbf{r}\mathbf{r}^H\mathbf{H}_L\left(\mathbf{Q}+\mathbf{\Omega}\right)\mathbf{H}_L^H\right)
\nonumber\\
&\overset{(a)}{=}
\left(\mathrm{vec}\left(\mathbf{H}_L\right)\right)^H\left(\left(\mathbf{Q}+\mathbf{\Omega}\right)^T\otimes\mathbf{r}\mathbf{r}^H\right)\mathrm{vec}\left(\mathbf{H}_L\right),
\label{originalProblem}
\end{align}
where the equation $(a)$ is due to the following equation:
\begin{align}
\mathrm{tr}\left(\mathbf{ABCD}\right)=\left(\mathrm{vec}\left(\mathbf{D}^T\right)\right)^{T}\left(\mathbf{C}^T\otimes\mathbf{A}\right)\mathrm{vec}\left(\mathbf{B}\right).
\end{align}
With similar procedures, the constraint (\ref{Newconstraint2}) -  (\ref{Newconstraint3}) can be transformed into a more convenient formulation, and
the non-convex problem (\ref{worstcaseDesign}) can be equivalently reformulated as:
\begin{subequations}\label{RobustProblem}
\begin{align}
&\max_{\substack{a_1>0,a_2>0,\mathbf{W}_E\succeq\mathbf{0}\\
\alpha>0,\beta>0,\mathbf{Q}\succeq\mathbf{0},\mathbf{\Omega}\succeq\mathbf{0}}}F\left(a_1,a_2,\alpha,\beta,\gamma,\mathbf{W}_E,\mathbf{Q},\mathbf{\Omega}\right)
\nonumber\\
&\mathrm{s.t.}\quad\alpha\geq \left(\mathrm{vec}\left(\mathbf{H}_L\right)\right)^H\left(\left(\mathbf{Q}+\mathbf{\Omega}\right)^T\otimes\mathbf{r}\mathbf{r}^H\right)\mathrm{vec}\left(\mathbf{H}_L\right)
,\nonumber\\
&\qquad\forall \mathbf{H}_L\in\xi_{\mathbf{H}_L};\label{constraint1}\\
&\beta\geq \left(\mathrm{vec}\left(\Delta\mathbf{H}_E\right)\right)^H\left(\left(\mathbf{Q}+\mathbf{\Omega}\right)^T\otimes\mathbf{W}_E^H\right)\mathrm{vec}\left(\Delta\mathbf{H}_E\right)
+\nonumber\\
&2\mathrm{Re}\left(\left(\mathrm{vec}\left(\Delta\mathbf{H}_E\right)\right)^H\left(\left(\mathbf{Q}+\mathbf{\Omega}\right)^T\otimes\mathbf{W}_E^H\right)\mathrm{vec}\left(\bar{\mathbf{H}}_E\right)
\right)+
\nonumber\\
&\mathrm{vec}\left(\bar{\mathbf{H}}_E\right)^H\left(\left(\mathbf{Q}+\mathbf{\Omega}\right)^T\otimes\mathbf{W}_E^H\right)\mathrm{vec}\left(\bar{\mathbf{H}}_E\right)
,\nonumber\\
&\qquad\forall \Delta\mathbf{H}_E\in\xi_{\Delta\mathbf{H}_E};\label{constraint2}\\
&\gamma\geq \Delta\mathbf{g}_E^H\mathbf{W}_E\Delta\mathbf{g}_E+2\mathrm{Re}\left( \Delta\mathbf{g}_E^H\mathbf{W}_E\bar{\mathbf{g}}_E\right)+
\bar{\mathbf{g}}_E^H\mathbf{W}_E\bar{\mathbf{g}}_E,\nonumber\\
&\qquad\forall \Delta\mathbf{g}_E\in\xi_{\Delta\mathbf{g}_E};\label{constraint3}\\
&\mathbf{M}\preceq \Delta\mathbf{H}_E\mathbf{\Omega}\Delta\mathbf{H}_E^H
+\Delta\mathbf{H}_E\mathbf{\Omega}\bar{\mathbf{H}}_E^H+
\bar{\mathbf{H}}_E\mathbf{\Omega}\Delta\mathbf{H}_E^H
\nonumber\\
&\qquad+\bar{\mathbf{H}}_E\mathbf{\Omega}\bar{\mathbf{H}}_E^H ,\forall \Delta\mathbf{H}_E\in\xi_{\Delta\mathbf{H}_E};
\label{constraint4}\\
&\mathrm{rank}\left(\mathbf{Q}\right)=1;\label{constraint5}
\\
& \mathrm{tr}\left(\mathbf{Q}+\mathbf{\Omega}\right)\leq P_{\mathrm{tot}}.\label{PowerConstraint}
\end{align}
\end{subequations}

The resultant problem (\ref{RobustProblem}) still remains challenging to solve. To obtain a more tractable problem, the following reformulation is carried out.

According to the eigenvalue equation, the constraint (\ref{constraint1}) can be rewritten as
\setcounter{equation}{17}
\begin{align}
\alpha \geq \delta^2_{\mathbf{H}_L}\lambda_{\max}\left(\left(\mathbf{Q}+\mathbf{\Omega}\right)^T\otimes\mathbf{rr}^H\right),
\label{equivalentconstraint1}
\end{align}
where $\lambda_{\max}$ is the maximal eigenvalue of  $\left(\mathbf{Q}+\mathbf{\Omega}\right)^T\otimes\mathbf{rr}^H$.

According to the S-Procedure \cite{ConvexOptimization}, the constraints (\ref{constraint2}) and (\ref{constraint3}) hold if and only if there exist $\lambda_{\beta}\geq 0$ and $\lambda_{\gamma}\geq 0$ such that (\ref{equivalentconstraint2}) and (\ref{equivalentconstraint3}) hold: 
\setcounter{equation}{19}
\begin{align}
&\lambda_{\gamma}\left[\begin{matrix}
\mathbf{I}_{N_T\times N_E}&\mathbf{0}\\
\mathbf{0}&-\delta^2_{\mathbf{g}_E}
\end{matrix}
\right]-
\left[\begin{matrix}
\mathbf{W}_E&\mathbf{W}_E\bar{\mathbf{g}}_E\\
\bar{\mathbf{g}}_E^H\mathbf{W}_E&
\bar{\mathbf{g}}_E^H\mathbf{W}_E\bar{\mathbf{g}}_E-\gamma
\end{matrix}
\right]\succeq\mathbf{0},
\nonumber\\
&\exists \lambda_{\gamma}\geq 0.\label{equivalentconstraint3}
\end{align}

Then, invoking the robust quadratic matrix
inequality of \cite{Wang_TVT}, the constraint (\ref{constraint4}) holds if and only if there exists a value $\lambda_{\mathbf{M}} \geq 0$ such that
\begin{align}
&\left[\begin{matrix}
\bar{\mathbf{H}}_E\mathbf{\Omega}\bar{\mathbf{H}}_E^H-\mathbf{M}&\bar{\mathbf{H}}_E\mathbf{\Omega}\\
\mathbf{\Omega}\bar{\mathbf{H}}^H_E&
\mathbf{\Omega}
\end{matrix}
\right]
-\lambda_{\mathbf{M}}\left[\begin{matrix}
\mathbf{I}_{N_E}&\mathbf{0}\\
\mathbf{0}&-\frac{1}{\delta^2_{\mathbf{H}_E}}
\end{matrix}
\right]\succeq\mathbf{0},
\nonumber\\
&\exists \lambda_{\mathbf{M}}\geq 0.\label{equivalentconstraint4}
\end{align}

Finally, invoking the classic semidefinite relaxation (SDR) technique\cite{Yang_SDR, Yang_JSAC}, we drop the constraint (\ref{constraint5}) to obtain the rank-relaxation version of  (\ref{RobustProblem}).

As a result, the robust joint design of the confidential signal's beamforming vector $\mathbf{v}$ and  the covariance matrix $\mathbf{\Omega}$ of the AN can be reformulated as
\begin{align}
&\max_{\substack{a_1>0,a_2>0,\mathbf{W}_E\succeq\mathbf{0}\\
\alpha>0,\beta>0,\mathbf{Q}\succeq\mathbf{0},\mathbf{\Omega}\succeq\mathbf{0}}}F\left(a_1,a_2,\alpha,\beta,\gamma,\mathbf{W}_E,\mathbf{Q},\mathbf{\Omega}\right)
\nonumber\\
&\mathrm{s.t.}\qquad(\ref{equivalentconstraint1}),(\ref{equivalentconstraint2}),(\ref{equivalentconstraint3}),(\ref{equivalentconstraint4})\, \mathrm{and}\,(\textrm{\ref{PowerConstraint}}).\label{Alternatingoptimization}
\end{align}
Now, although the problem (\ref{Alternatingoptimization}) itself still remains non-convex and its global optimum is difficult to obtain, it can be decomposed into two sub-problems that are convex. Then, we can adopt the efficient
BCD algorithm of \cite[Section 2.7]{NolinearProgramming} to find a sequence of locally optimal solutions of (\ref{Alternatingoptimization}). More specifically, in this method, all optimization variables are decoupled into two blocks:
$\left\{\alpha,\beta,\mathbf{Q},\mathbf{\Omega}\right\}$ and
$\left\{a_1,a_2,\mathbf{W}_E\right\}$. Note that when the block $\left\{a_1,a_2,\mathbf{W}_E\right\}$ is fixed, the problem (\ref{Alternatingoptimization}) becomes convex with respect to $\left\{\alpha,\beta,\mathbf{Q},\mathbf{\Omega}\right\}$, and  when the block $\left\{\alpha,\beta,\mathbf{Q},\mathbf{\Omega}\right\}$ is fixed, the problem (\ref{Alternatingoptimization}) becomes convex with respect to $\left\{a_1,a_2,\mathbf{W}_E\right\}$.
Hence, the two blocks of variables can be optimized in turn with low complexity by fixing one block and optimizing the other.\footnote{As pointed out in \cite{real_time_convex_optimization}, the combination of
substantially increased computing power, sophisticated algorithms,
and new coding approaches has made it possible to
solve modest-sized convex optimization problems on
microsecond or millisecond time scales and with strict
completion deadlines. This enables real-time convex optimization
in signal processing. In fact, this time-scale is also similar to a single transmission time interval (TTI) in 4G LTE (one millisecond) and in 5G NR (a fraction of one millisecond).}

The BCD algorithm employed is summarized in Algorithm 1.
\begin{algorithm}[!t]
\small
\small{\caption{\small The BCD algorithm for solving the robust joint optimization problem (\ref{Alternatingoptimization}).}} 
\label{alg_1}
\begin{algorithmic}
\STATE
Set $l=1$ and perform initialization with arbitrary feasible $\bar{a}_1(l),\bar{a}_2(l),\bar{\mathbf{W}}_E(l)$ .

\STATE
\WHILE{the difference between the values of $F\left(\bar{a}_1(l),\bar{a}_2(l),\bar{\alpha}(l),\bar{\beta}(l),\bar{\gamma}(l),\bar{\mathbf{W}}_E(l),\bar{\mathbf{Q}}(l),\bar{\mathbf{\Omega}}(l)\right)
$ in successive iterations is larger than $\phi$ for some $\phi >0$,}
\STATE 1. Solve the problem (\ref{Alternatingoptimization}) by fixing ${a}_1 = \bar{a}_1(l),{a}_2 = \bar{a}_2(l),{\mathbf{W}}_E = \bar{\mathbf{W}}_E(l)$, and obtain the global optimum, i.e., $\alpha^*,\beta^*,\gamma^*,\mathbf{Q}^*,\mathbf{\Omega}^*$.

\STATE 2. Set $\bar{\alpha}(l) = \alpha^*,\bar{\beta}(l)=\beta^*,\bar{\gamma}(l)=\gamma^*$, $\bar{\mathbf{Q}}(l) = \mathbf{Q}^*$, and $\bar{\mathbf{\Omega}}(l) = \mathbf{\Omega}^*$.
\STATE 3. Solve the problem (\ref{Alternatingoptimization}) by fixing $\alpha = \bar{\alpha}(l) ,\beta = \bar{\beta}(l) ,\gamma = \bar{\gamma}(l),\mathbf{Q}=\bar{\mathbf{Q}}(l),\mathbf{\Omega}=\bar{\mathbf{\Omega}}(l)$ and obtain the global optimum, i.e., $a_1^*,a_2^*,{\mathbf{W}}_E^*$.
\STATE 4. $l=l+1$,
\STATE 5. Set $\bar{a}_1(l) = a_1^*,\bar{a}_2(l) = a_2^*,$ and $\bar{\mathbf{W}}_E(l) = \mathbf{W}_E^*$.
\ENDWHILE
\STATE \textbf{Output:} $\mathbf{Q},\mathbf{\Omega}$.
\end{algorithmic}
\end{algorithm}
Since Algorithm 1 yields non-descending objective function values, it must converge subject to the plausible constraint that physically the secrecy rate is finite.

Given Algorithm 1, the optimal $\mathbf{Q}$ can be found by solving (\ref{Alternatingoptimization}). However, there is no guarantee that $\mathrm{rank}\left(\mathbf{Q}\right)=1$ and that $\mathbf{v}$ can be derived from $\mathbf{Q}$ without any performance deterioration. In the following, we propose a simple method of constructing $\mathbf{v}$ from $\mathbf{Q}$ with good guaranteed performance, as summarized in Algorithm 2.
\begin{algorithm}[!t]
\small
\caption{\small{The proposed BCD algorithm for solving the robust joint optimization problem (\ref{Alternatingoptimization}).}} 
\label{alg_2}
\begin{algorithmic}
\STATE
1. Set $\hat{\mathbf{Q}}=\frac{\mathbf{Qh}_B\mathbf{h}^H_B\mathbf{Q}}{\mathbf{h}_B^H\mathbf{Q}\mathbf{h}_B}$.
\STATE
2. $\mathbf{v}$ is generated from $\hat{\mathbf{Q}}$ by singular value decomposition.
\end{algorithmic}
\end{algorithm}
\begin{pp}
The secrecy performance achieved by $\hat{\mathbf{Q}}$ constructed in Algorithm 2 is not inferior to the one achieved by ${\mathbf{Q}}$.
\end{pp}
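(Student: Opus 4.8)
\emph{Proof proposal.} The plan is to verify that $\hat{\mathbf{Q}}$, paired with the same $\mathbf{\Omega}$, is feasible for the original worst-case problem (\ref{worstcaseDesign}) and attains an objective value no smaller than $\mathbf{Q}$. I would first record two elementary algebraic facts. Since $\mathbf{Q}=\mathbf{Q}^H$, one has $\hat{\mathbf{Q}}=\frac{(\mathbf{Q}\mathbf{h}_B)(\mathbf{Q}\mathbf{h}_B)^H}{\mathbf{h}_B^H\mathbf{Q}\mathbf{h}_B}$, so $\mathrm{rank}(\hat{\mathbf{Q}})=1$, and moreover $\mathbf{h}_B^H\hat{\mathbf{Q}}\mathbf{h}_B=\frac{(\mathbf{h}_B^H\mathbf{Q}\mathbf{h}_B)^2}{\mathbf{h}_B^H\mathbf{Q}\mathbf{h}_B}=\mathbf{h}_B^H\mathbf{Q}\mathbf{h}_B$, i.e.\ the received signal power at the DR is \emph{exactly} preserved. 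The second, and real, crux is the semidefinite dominance $\mathbf{0}\preceq\hat{\mathbf{Q}}\preceq\mathbf{Q}$: writing $\mathbf{Q}=\mathbf{Q}^{1/2}\mathbf{Q}^{1/2}$ with $\mathbf{Q}^{1/2}\succeq\mathbf{0}$ and setting $\mathbf{w}=\mathbf{Q}^{1/2}\mathbf{h}_B$, one gets $\mathbf{Q}-\hat{\mathbf{Q}}=\mathbf{Q}^{1/2}\big(\mathbf{I}_{N_T}-\tfrac{\mathbf{w}\mathbf{w}^H}{\mathbf{w}^H\mathbf{w}}\big)\mathbf{Q}^{1/2}\succeq\mathbf{0}$, because $\tfrac{\mathbf{w}\mathbf{w}^H}{\mathbf{w}^H\mathbf{w}}$ is a rank-one orthogonal projector.

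With these in hand the remainder is monotonicity bookkeeping, carried out term by term for an arbitrary fixed realization $(\mathbf{H}_L,\Delta\mathbf{H}_E,\Delta\mathbf{g}_E)$ in the uncertainty sets. Feasibility is immediate: $\mathrm{rank}(\hat{\mathbf{Q}})=1$ from the first fact, and $\mathrm{tr}(\hat{\mathbf{Q}}+\mathbf{\Omega})\le\mathrm{tr}(\mathbf{Q}+\mathbf{\Omega})\le P_{\mathrm{tot}}$ from $\hat{\mathbf{Q}}\preceq\mathbf{Q}$ and monotonicity of the trace. For the objective: $\eta_1$ is unchanged, since its numerator is preserved and $\mathbf{\Omega}$ is untouched; $\eta_2$ does not decrease, because $\hat{\mathbf{Q}}\preceq\mathbf{Q}$ forces $\mathbf{r}^H\mathbf{H}_L\hat{\mathbf{Q}}\mathbf{H}_L^H\mathbf{r}\le\mathbf{r}^H\mathbf{H}_L\mathbf{Q}\mathbf{H}_L^H\mathbf{r}$, so the denominator of $\eta_2$ can only shrink; and the eavesdropper penalty does not decrease, since with $\hat{\mathbf{Z}}\triangleq\mathbf{H}_E\hat{\mathbf{Q}}\mathbf{H}_E^H+P_t\mathbf{g}_E\mathbf{g}_E^H$ we have $\mathbf{0}\preceq\hat{\mathbf{Z}}\preceq\mathbf{Z}$, hence, using $\mathbf{N}\succ\mathbf{0}$, the identity $\det(\mathbf{I}_{N_E}+\mathbf{Z}\mathbf{N}^{-1})=\det(\mathbf{N}+\mathbf{Z})/\det(\mathbf{N})$ and monotonicity of $\det(\cdot)$ on the PSD cone, $\det(\mathbf{I}_{N_E}+\hat{\mathbf{Z}}\mathbf{N}^{-1})\le\det(\mathbf{I}_{N_E}+\mathbf{Z}\mathbf{N}^{-1})$, so that $-\ln\det(\mathbf{I}_{N_E}+\hat{\mathbf{Z}}\mathbf{N}^{-1})\ge-\ln\det(\mathbf{I}_{N_E}+\mathbf{Z}\mathbf{N}^{-1})$.

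Finally I would assemble the pieces: for every realization of the channel uncertainties the sum-secrecy-rate integrand evaluated at $(\hat{\mathbf{Q}},\mathbf{\Omega})$ dominates the one evaluated at $(\mathbf{Q},\mathbf{\Omega})$, and taking the minimum over $\xi_{\mathbf{H}_L}\times\xi_{\mathbf{H}_E}\times\xi_{\mathbf{g}_E}$ preserves the inequality, so the worst-case sum secrecy rate achieved by $\hat{\mathbf{Q}}$ is no smaller than that achieved by $\mathbf{Q}$, which is the claim. The main (though short) obstacle is the semidefinite-dominance step $\hat{\mathbf{Q}}\preceq\mathbf{Q}$; everything else is routine. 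I would also emphasize in the write-up why the particular normalization in Algorithm~2 is not incidental: it is exactly what yields $\mathbf{h}_B^H\hat{\mathbf{Q}}\mathbf{h}_B=\mathbf{h}_B^H\mathbf{Q}\mathbf{h}_B$ rather than a mere inequality, so that the improvements in $\eta_2$ and in the eavesdropper term come at no cost in $\eta_1$.
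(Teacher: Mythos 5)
Your proposal is correct and follows essentially the same route as the paper's proof: the key step in both is the semidefinite dominance $\mathbf{Q}-\hat{\mathbf{Q}}=\mathbf{Q}^{1/2}\bigl(\mathbf{I}-\tfrac{\mathbf{Q}^{1/2}\mathbf{h}_B\mathbf{h}_B^H\mathbf{Q}^{1/2}}{\mathbf{h}_B^H\mathbf{Q}\mathbf{h}_B}\bigr)\mathbf{Q}^{1/2}\succeq\mathbf{0}$, followed by the same term-by-term monotonicity checks on $\eta_1$, $\eta_2$ and the eavesdropper term. If anything, your write-up is slightly more complete than the paper's, since you spell out the $\det(\mathbf{I}_{N_E}+\mathbf{Z}\mathbf{N}^{-1})=\det(\mathbf{N}+\mathbf{Z})/\det(\mathbf{N})$ identity behind the log-det monotonicity and note explicitly that the pointwise dominance survives the minimum over the uncertainty sets, both of which the paper leaves implicit.
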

\begin{proof}
The objective function value of the problem (\ref{worstcaseDesign}) represents the achievable secrecy rate performance. Then, let us replace $\mathbf{v}\mathbf{v}^H$ by $\hat{\mathbf{Q}}$ and ${\mathbf{Q}}$, respectively, in the problem (\ref{worstcaseDesign}) to give the proof.

Firstly, we have the following linear matrix inquality
\begin{align}
{\mathbf{Q}}-\hat{\mathbf{Q}}
=\mathbf{Q}^{1/2}\left(\mathbf{I}-\frac{\mathbf{Q}^{1/2}\mathbf{h}_B\mathbf{h}^H_B\mathbf{Q}^{1/2}}{\mathbf{h}_B^H\mathbf{Q}\mathbf{h}_B}\right)\mathbf{Q}^{1/2}
\succeq\mathbf{0}.
\end{align}
Thus, $\mathrm{tr}\left({\mathbf{Q}}\right)\geq\mathrm{tr}\left(\hat{\mathbf{Q}}\right)$. Then, we can conclude that if ${\mathbf{Q}}$ satisfies the power constraint of the problem (\ref{worstcaseDesign}), $\hat{\mathbf{Q}}$ should also satisfy this power constraint.

Next, let us check the objective function of the problem (\ref{worstcaseDesign}), where $\mathbf{v}\mathbf{v}^H$ appears in the numerator of $\eta_1$, in the denominator of $\eta_2$, as well as in $\mathbf{Z}$. Then, for establishing the proof, the following inequalities must be proved:
\begin{align}
\quad\mathbf{h}_B^H{\mathbf{Q}}\mathbf{h}_B\leq \mathbf{h}_B^H\hat{\mathbf{Q}}\mathbf{h}_B,
\label{inequality1}\\
\mathbf{r}^H\mathbf{H}_L{\mathbf{Q}}\mathbf{H}_L^H\mathbf{r}\geq \mathbf{r}^H\mathbf{H}_L\hat{\mathbf{Q}}\mathbf{H}_L^H\mathbf{r},
\label{inequality2}\\
\mathbf{H}_E{\mathbf{Q}}\mathbf{H}_E^H\succeq \mathbf{H}_E\hat{\mathbf{Q}}\mathbf{H}_E^H.\label{inequality3}
\end{align}
Observe that (\ref{inequality1}) holds true, since $\mathbf{h}_B^H\hat{\mathbf{Q}}\mathbf{h}_B=\mathbf{h}_B^H{\mathbf{Q}}\mathbf{h}_B$. Additionally, since ${\mathbf{Q}}-\hat{\mathbf{Q}}\succeq\mathbf{0}$, (\ref{inequality2}) and (\ref{inequality3}) also hold true.
\end{proof}

The following proposition shows that the limit points generated by Algorithm 1 would satisfy the KKT condition of the problem (\ref{worstcaseDesign}).
\begin{pp}
The limit point of the sequence $\left\{\bar{a}_1(l),\bar{a}_3(l),\bar{\alpha}(l),\bar{\beta}(l),\bar{\gamma}(l),\bar{\mathbf{W}}_E(l),\bar{\mathbf{Q}}(l),\bar{\mathbf{\Omega}}(l)\right\}$ generated by Algorithm 1, i.e.,
$\left\{\bar{a}_1^*,\bar{a}_3^*,\bar{\alpha}^*,\bar{\beta}^*,\bar{\gamma}^*,\bar{\mathbf{W}}_E^*,\bar{\mathbf{Q}}^*,\bar{\mathbf{\Omega}}^*\right\}$
is a KKT point of the nonconvex optimization problem (\ref{worstcaseDesign}).
\end{pp}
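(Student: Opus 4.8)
The goal is to show that a limit point produced by the block coordinate descent (BCD) iteration of Algorithm~1 satisfies the KKT conditions of the original worst-case problem~(\ref{worstcaseDesign}). The plan proceeds in four stages. First, I would establish that the reformulated problem~(\ref{Alternatingoptimization}) is \emph{equivalent} to~(\ref{worstcaseDesign}) in the sense that the two share the same optimal value and that, at any KKT point of~(\ref{Alternatingoptimization}), the auxiliary variables $a_1,a_2,\alpha,\beta,\gamma,\mathbf{W}_E$ attain the maximizers that make the bounds in~(\ref{equivalentequation1})--(\ref{equivalentequation3}), (\ref{constraint1})--(\ref{constraint4}) tight; this is precisely the content of the cited results \cite[Proposition~1]{RFullDuplex}, \cite[Lemma~4.1]{InterativeAlgorithm}, the S-Procedure \cite{ConvexOptimization} and the robust quadratic matrix inequality \cite{Wang_TVT}, together with the SDR rank relaxation whose tightness is guaranteed by Proposition~2 (Algorithm~2). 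Thus it suffices to prove that the limit point is a KKT point of~(\ref{Alternatingoptimization}).

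\textbf{Second stage: convergence of the BCD iterates.} Since each of the two sub-problems --- minimizing $F$ over $\{\alpha,\beta,\mathbf{Q},\mathbf{\Omega}\}$ with $\{a_1,a_2,\mathbf{W}_E\}$ fixed, and vice versa --- is convex (as argued in the paragraph preceding Algorithm~1) and is solved to global optimality in Steps~1 and~3, the sequence of objective values $F(\cdot)$ is monotonically non-decreasing; being bounded above by the finite physical secrecy rate, it converges. I would then invoke compactness: the feasible set is closed and, thanks to the power constraint~(\ref{PowerConstraint}) and the structure of the constraints, the iterates $\{\bar a_1(l),\bar a_2(l),\bar\alpha(l),\bar\beta(l),\bar\gamma(l),\bar{\mathbf{W}}_E(l),\bar{\mathbf{Q}}(l),\bar{\mathbf{\Omega}}(l)\}$ lie in a bounded region, so a convergent subsequence exists with limit point $\{\bar a_1^*,\dots,\bar{\mathbf{\Omega}}^*\}$.

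\textbf{Third stage: limit point is a coordinatewise (Nash-type) minimizer.} Passing to the limit along the convergent subsequence and using continuity of $F$ and of the constraint functions, I would show that at the limit point each block is optimal for its sub-problem when the other block is frozen. The key fact here is that the update maps are point-to-set continuous (the sub-problems depend continuously on the fixed block and the solution sets are regular because Slater's condition holds for each convex sub-problem), so the limit of optimal solutions is an optimal solution of the limiting sub-problem. This is the standard BCD argument of \cite[Section~2.7]{NolinearProgramming}, which applies because the coupling is through a differentiable objective and the feasible set has a Cartesian-product-like structure across the two blocks (the constraints~(\ref{equivalentconstraint1})--(\ref{equivalentconstraint4}), (\ref{PowerConstraint}) couple $\mathbf{Q},\mathbf{\Omega},\mathbf{W}_E$, but each sub-problem still has a convex feasible set once the other block is fixed).

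\textbf{Fourth stage: from coordinatewise optimality to KKT.} Writing the KKT conditions of each convex sub-problem at the limit point and concatenating them, the stationarity conditions in the two disjoint blocks of primal variables, together with the shared primal feasibility and complementary slackness for the common constraints, assemble into the full KKT system of~(\ref{Alternatingoptimization}); the cross-block stationarity entries vanish because $F$ is separately stationary in each block at the limit point. Finally, translating back through the equivalence of the first stage --- substituting the tight auxiliary variables --- yields a point satisfying the KKT conditions of~(\ref{worstcaseDesign}), which is the claim. \textbf{The main obstacle} is the third stage: the generic BCD convergence theorem requires either uniqueness of the block minimizers or a regularity condition (e.g.\ the per-block problems having a unique solution, or the objective being pseudoconvex in each block) to rule out cycling, and here $F$ need not be strictly convex in $\mathbf{Q}$ or $\mathbf{W}_E$. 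I would handle this by exhibiting the needed regularity --- for instance, arguing that the maximizing $a_1,a_2,\mathbf{W}_E$ are uniquely determined by $\{\mathbf{Q},\mathbf{\Omega}\}$ via the closed-form optimality conditions of~(\ref{equivalentequation1})--(\ref{equivalentequation3}) (so the second block update is single-valued), which is enough to invoke the two-block BCD convergence-to-stationary-point result without a uniqueness hypothesis on the first block.
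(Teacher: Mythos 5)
Your proposal follows essentially the same route as the paper's Appendix B: monotonicity of the objective plus the power constraint gives bounded iterates and hence limit points (Bolzano--Weierstrass); the two-block Gauss--Seidel convergence result of Grippo and Sciandrone (cited in the paper as \cite[Corollary 2]{convergence}, which for two blocks needs no per-block uniqueness, exactly resolving the ``main obstacle'' you flag) yields stationarity of the reformulated problem; and substituting the closed-form maximizers $a_1^*$, $a_2^*$, $\mathbf{W}_E^*$ from (\ref{equivalentequation1})--(\ref{equivalentequation3}) makes $\nabla\Xi$ coincide with the gradient of the original objective, so the stationary point satisfies the KKT conditions of (\ref{worstcaseDesign}), with Algorithm 2 justifying dropping the rank constraint. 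The argument is correct and matches the paper's proof in both structure and key ingredients.
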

\begin{proof}
The proof is given in Appendix B.
\end{proof}
\section{Simulation Results}
In our simulations, we assume that all the available channel coefficients obey $\mathcal{CN}\left({0},1\right)$. The transmit power of the UT is set to $P_t=20$ dBm and the received noise power is normalized to 0 dBm.
The simulation results were obtained by averaging over 200 independent trials. With the obtained
$\mathbf{v}$ and $\mathbf{\Omega}$,  the achievable worst-case sum secrecy rate $R_{w}$ can be
computed from (\ref{worstcaseratecalculation}) of Appendix A. Note that mathematically it is possible for the sum secrecy rate $R_{w}$ to be negative when the transmit power $P_{tot}$ is low, but in the real world, the achievable secrecy rate cannot be negative, since the information wiretapped by the eavesdropper cannot be higher than that transmitted by the legitimate transmitter. Hence, when we encounter negative secrecy sum rate in our simulations, we have to use $\max(R_{w},0)$ to ensure that its lowest value is zero.
Assuming $N_T=3,N_R=2$ and $N_E=2$,  in Fig. \ref{PowerPower} we show the achievable average $R_{w}$ versus the total available transmit power $P_{\mathrm{tot}}$ of the FD BS, subject to the error bounds $\delta_{\mathbf{H}_E}$, $\delta_{\mathbf{H}_L}$ and $\delta_{\mathbf{g}_E}$. When  $P_{\mathrm{tot}}$ increases, the strength of both the confidential signals and the AN also increases. Therefore, the average $R_{w}$ increases with $P_{\mathrm{tot}}$, which has been validated by our simulation results in Fig.  \ref{PowerPower}. On the other hand, as the error bounds increase,
the CSI estimation of $\mathbf{H}_E$, $\mathbf{H}_L$ and $\mathbf{g}_E$ becomes more and more inaccurate. Therefore, both the information leakage to the Eve  and the jamming signal (i.e., the AN) leakage to the BS increase with the error bounds, which results in a secrecy performance degradation. This argument has been validated by our simulation results in Fig.  \ref{PowerPower}, showing that smaller error bounds result in a larger average $R_w$.

\begin{figure}[tbp]
\centering
\includegraphics[width=3.5in]{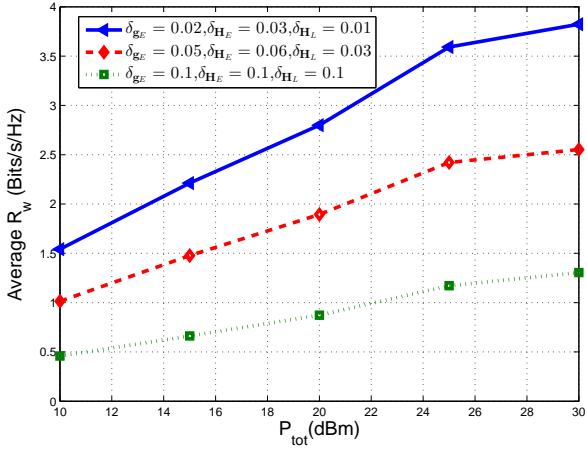}
\caption{The average worst-case sum secrecy rate versus $P_{\mathrm{tot}}$ under $N_T=3,N_R=2,N_E=2$, and different values of error bounds.}
\label{PowerPower}
\end{figure}
\begin{figure}[tbp]
\centering
\includegraphics[width=3.5in]{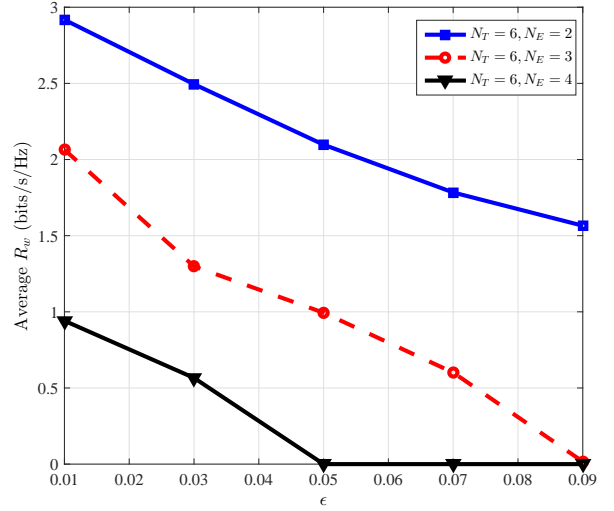}
\caption{The average worst-case sum secrecy rate versus the error bound $\epsilon$ under $N_R=2,P_{\mathrm{tot}}=10$ dBm, and different values of $N_E$.}
\label{Epsilon}
\end{figure}

For illustrating the robustness of the proposed secrecy transmission scheme to imperfect CSI,  we show in Fig. \ref{Epsilon} how the achievable average $R_w$ changes upon increasing the error bounds.  For simplicity, we assume $\delta_{\mathbf{H}_E}=\delta_{\mathbf{H}_L}=\delta_{\mathbf{g}_E}=\epsilon$, $P_{\mathrm{tot}}=10$ dBm and $N_R=2$.
We observe that the achievable average $R_w$ decreases upon increasing $\epsilon$, and the secrecy performance degradation becomes significant when $\epsilon$ increases from 0.01 to 0.09. Additionally, we see that a larger $N_E$ results in a smaller average $R_w$. This is because when $N_E$ becomes larger, the wiretapping capability of the Eve becomes stronger, and the achievable secrecy performance is degraded.

\begin{figure}[tbp]
\centering
\includegraphics[width=3.5in]{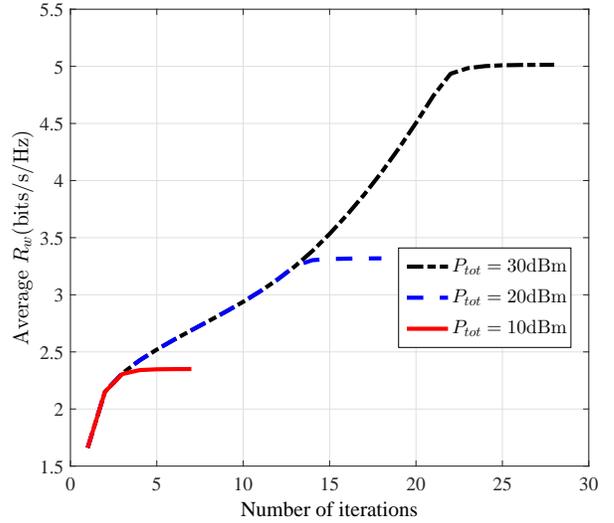}
\caption{The average worst-case sum secrecy rate versus $P_{\mathrm{tot}}$ under $N_T=4,N_R=2,N_E=2,P_t=5$dBm.}
\label{InterationNUmber}
\end{figure}
In Fig. \ref{InterationNUmber} we characterize the convergence behaviour of Algorithm 1 subject to different values of $P_{tot}$. It is observed that the number of iterations required by Algorithm 1 increases when $P_{tot}$ becomes larger. This is because upon  increasing  $P_{tot}$, the feasible set of the problem (\ref{Alternatingoptimization}) is expanded. Additionally, we see that Algorithm 1 indeed converges in all of our observations recorded.

\begin{figure}[tbp]
\centering
\includegraphics[width=3.5in]{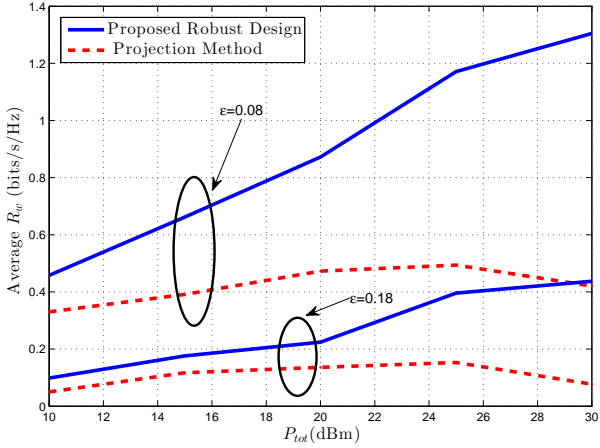}
\caption{Performance comparison between the proposed robust strategy and the projection method under $N_T=3,N_R=2,N_E=2$dBm.}
\label{PerformanceComparison}
\end{figure}
For characterizing the performance of our proposed algorithm, we use projection matrix theory to construct a benchmarker (termed as ''Projection Method''), and use the analysis results derived in Appendix A to quantify its worst-case secrecy rate. Since the CSI of the eavesdropping channel is imperfect, the precoder $\mathbf{v}$ is constructed by  projecting $\mathbf{h}_B$ onto the null-space of the known part of the CSI of the eavesdropping channel, i.e., $\bar{\mathbf{H}}_E$. Explicitly,
\begin{align}
\mathbf{v} = P_t\frac{\mathbf{U}_E\mathbf{U}^H_E\mathbf{h}_B}{||\mathbf{U}_E\mathbf{U}^H_E\mathbf{h}_B||_F},
\end{align}
where, $P_t$ is the transmit power of the information signal, and $\mathbf{U}_E$ is the null-space of $\bar{\mathbf{H}}_E$. For deigning AN, since the self-interference channel $\mathbf{H}_L$ and the eavesdropping channel $\mathbf{H}_E$ are both imperfect known,
we construct the artificial noise in the null-space of $\mathbf{h}_B$ and allocate the power of the artificial noise equally.
Explicitly,
\begin{align}
\mathbf{\Omega} = \frac{P_N\mathbf{U}_{\mathbf{h}_B}\mathbf{U}_{\mathbf{h}_B}^H}{N_T-1},
\end{align}
where, $P_N$ is the total power of the artificial noise and $\mathbf{U}_{\mathbf{h}_B}$ is the null-space of $\mathbf{h}_B$.
We allocate the transmit power equally between the information signal and AN, i.e.,
$P_t=P_N=\frac{P_{tot}}{2}$.

Fig. \ref{PerformanceComparison} show the secrecy performance achieved both by our proposed robust strategy and by the ''Projection Method''. Based on our simulation results, we can find that the secrecy performance achieved by the proposed robust strategy is much  better  than that of the ''Projection Method'', and the gains become higher upon increasing $P_{tot}$, which has confirmed the efficiency of our proposed robust strategy.

\section{Conclusions}
In this paper, we investigated the worst-case physical layer security of a FD system composed of a FD multi-antenna BS, a ``sophisticated/strong'' multi-antenna eavesdropper, a single-antenna UT and a single-antenna DR. Assuming that only imperfect CSI of the eavesdropping and residual self-interference channels is available to the FD BS, we have proposed a robust secrecy transmission scheme, where the AN and beamforming vector are jointly designed for securing the confidential signals transmitted from the BS and UT. By employing the BCD algorithm and linear matrix inequality, we transform the worst-case non-convex problem into a sequence of convex problems in order to find its locally optimal solution. For evaluating the achievable performance, the analysis result of the worst-case secrecy rate is derived. Furthermore, by employing the projection matrix theory, we construct another secrecy transmission scheme for the performance comparison. Our simulation results have validated the effectiveness of the proposed secrecy transmission scheme and provided valuable insights into how several relevant design parameters affect the achievable secrecy performance of the system considered.

\appendices
\section{Achievable Worst-Case Sum Secrecy Rate Calculation}
Invoking Algorithm 1 and Algorithm 2, we obtain the optimized rank-1 $\mathbf{Q}$ and $\mathbf{\Omega}$.
Then, the achievable worst-case sum secrecy rate $R_{w}$ can be calculated from the objective function of the problem (\ref{worstcaseDesign}).
Firstly, the minimum of $\eta_2$, i.e, $\eta_2^{\min}$ can be calculated as $\eta_2^{\min}=\frac{P_t||\mathbf{g}_B||_F^2}{\delta^2_{\mathbf{H}_L}\lambda_{\max}\left(\left(\mathbf{Q}+\mathbf{\Omega}\right)^T\otimes\mathbf{rr}^H\right)
+1}$.

Then, we calculate the maximum of $\ln\det\left(\mathbf{I}_{N_E}+\mathbf{Z}\mathbf{N}^{-1}\right)$, which can be expanded as
\begin{align}
&\ln\det\left(\mathbf{I}_{N_E}+\mathbf{H}_E\mathbf{Q}\mathbf{H}_E^H\left(\mathbf{H}_E\mathbf{\Omega}\mathbf{H}_E^H+\mathbf{I}_{N_E}\right)^{-1}\right)-
\nonumber\\
&\ln\left(1+\mathrm{tr}\left(P_t\mathbf{g}_E\mathbf{g}_E^H\left(\mathbf{H}_E\left(\mathbf{\Omega}+\mathbf{Q}\right)\mathbf{H}_E^H+\mathbf{I}_{N_E}\right)^{-1}\right)\right)\overset{(a)}{=}
\nonumber\\
&\ln\left(1+\mathrm{tr}\left(\mathbf{H}_E\mathbf{Q}\mathbf{H}_E^H\left(\mathbf{H}_E\mathbf{\Omega}\mathbf{H}_E^H+\mathbf{I}_{N_E}\right)^{-1}\right)\right)-
\nonumber\\
&\ln\left(1+\mathrm{tr}\left(P_t\mathbf{g}_E\mathbf{g}_E^H\left(\mathbf{H}_E\left(\mathbf{\Omega}+\mathbf{Q}\right)\mathbf{H}_E^H+\mathbf{I}_{N_E}\right)^{-1}\right)\right)\nonumber.
\end{align}
Step $(a)$ holds true, since $\mathrm{rank}\left(\mathbf{Q}\right)=1$.

Then, the upper bound of $\ln\det\left(\mathbf{I}_{N_E}+\mathbf{Z}\mathbf{N}^{-1}\right)$ is given by $\mathrm{ln}(1+\theta^*_1)+\mathrm{ln}(1+\theta^*_2)$, where
\begin{align}
\theta^*_1=&{\arg\min}_{\theta_1}\theta_1
\nonumber\\
&\mathrm{s.t.}\, \theta_1 \geq \mathrm{tr}\left(\mathbf{H}_E\mathbf{Q}\mathbf{H}_E^H\left(\mathbf{H}_E\mathbf{\Omega}\mathbf{H}_E^H+\mathbf{I}_{N_E}\right)^{-1}\right),
\nonumber\\
&\forall \Delta\mathbf{H}_E\in\xi_{\mathbf{H}_E};\label{theta1}
\\
\theta^*_2=&{\arg\min}_{\theta_2}\theta_2
\nonumber\\
&\mathrm{s.t.}\, \theta_2\geq P_t\mathrm{tr}\left(\mathbf{g}_E\mathbf{g}_E^H\left(\mathbf{H}_E\left(\mathbf{\Omega}+\mathbf{Q}\right)\mathbf{H}_E^H+\mathbf{I}_{N_E}\right)^{-1}\right),
\nonumber\\
&\forall \mathbf{g}_E\in\xi_{\mathbf{g}_E}, \forall\Delta\mathbf{H}_E\in\xi_{\mathbf{H}_E}.\label{theta2}
\end{align}
After some further manipulations, (\ref{theta1}) can be reformulated as the following semidefinite programming (SDP) problem:
\begin{align}
&{\min}_{\theta_1}\theta_1
\nonumber\\
&\mathrm{s.t.}
\left[\begin{matrix}
\bar{\mathbf{H}}_E\left(\theta_1\mathbf{\Omega}-\mathbf{Q}\right)\bar{\mathbf{H}}_E^H&
\bar{\mathbf{H}}_E\left(\theta_1\mathbf{\Omega}-\mathbf{Q}\right)
\\
\left(\theta_1\mathbf{\Omega}-\mathbf{Q}\right)\bar{\mathbf{H}}_E^H&
\theta_1\mathbf{\Omega}-\mathbf{Q}
\end{matrix}
\right]
-t\mathbf{\Phi}\succeq\mathbf{0},
\nonumber\\
&\exists t\geq 0,
\end{align}
where we have $\mathbf{\Phi}\triangleq\left[\begin{matrix}
\mathbf{I}_{N_E}&\mathbf{0}\\
\mathbf{0}&-\frac{1}{\delta^2_{\mathbf{H}_E}}
\end{matrix}
\right]$.

Similarly, after some manipulations, 
(\ref{theta2}) can be reformulated as the following SDP problem:
\begin{align}
&{\min}_{\theta_2}\theta_2
\nonumber\\
&\mathrm{s.t.}
\left[\begin{matrix}
\theta_2\bar{\mathbf{H}}_E\left(\mathbf{\Omega}+\mathbf{Q}\right)\bar{\mathbf{H}}_E^H+\mathbf{I}_{N_E}-P_t\hat{\mathbf{g}}_E\hat{\mathbf{g}}_E^H&
\theta_2\bar{\mathbf{H}}_E\left(\mathbf{\Omega}+\mathbf{Q}\right)
\\
\theta_2\left(\mathbf{\Omega}+\mathbf{Q}\right)\bar{\mathbf{H}}_E^H&
\theta_2\left(\mathbf{\Omega}+\mathbf{Q}\right)
\end{matrix}
\right]
\nonumber\\
&\qquad-\eta\left[\begin{matrix}
\mathbf{I}_{N_E}&\mathbf{0}\\
\mathbf{0}&-\frac{1}{\delta^2_{\mathbf{H}_E}}
\end{matrix}
\right]\succeq\mathbf{0},
\nonumber\\
&\qquad\exists \eta\geq 0,
\end{align}
where $\hat{\mathbf{g}}_E\triangleq \bar{\mathbf{g}}_E+\frac{\bar{\mathbf{g}}_E}{||\bar{\mathbf{g}}_E||_F}\delta_{\mathbf{g}_E}$.
$R_{w}$ is then calculated by
\begin{align}
R_{w}=\mathrm{ln}\left(1+\eta_1\right)+\mathrm{ln}\left(1+\eta_2^{\min}\right)-\mathrm{ln}\left[\left(1+\theta_1^*\right)\left(1+\theta_2^*\right)\right].
\label{worstcaseratecalculation}
\end{align}

\section{Proof of Proposition 2}
For clarity, \cite[Corollary 2]{convergence} is introduced, which is given as follows.
\begin{corollary}
Consider the problem
\begin{align}
\min_{\mathbf{X},a}f\left(\mathbf{X},a\right),\quad s.t. \left(\mathbf{X},a\right)\in\mathcal{X}\times \mathcal{A},
\end{align}
where $f\left(\mathbf{X},a\right)$ is a continuously differentiable function; $\mathcal{X}\subseteq\mathbb{C}^{m\times n}$ and $\mathcal{A}\subseteq\mathbb{R}$ are closed, nonempty, and convex subsets. Suppose that the sequence $\left\{\left(\mathbf{X}^{(n)},a^{(n)}\right)\right\}$ generated by optimizing $\mathbf{X}$ and $a$ alternatively has
limit points. Every limit point of $\left\{\left(\mathbf{X}^{(n)},a^{(n)}\right)\right\}$ is a stationary point of the problem.
\end{corollary}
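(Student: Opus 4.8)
The plan is to prove this as the standard convergence guarantee for two-block coordinate descent (alternating exact minimization) over a product of closed convex sets. First I would write the updates explicitly: from $\left(\mathbf{X}^{(n)},a^{(n)}\right)$ the scheme produces $\mathbf{X}^{(n+1)}\in\arg\min_{\mathbf{X}\in\mathcal{X}}f\left(\mathbf{X},a^{(n)}\right)$ and then $a^{(n+1)}\in\arg\min_{a\in\mathcal{A}}f\left(\mathbf{X}^{(n+1)},a\right)$. Since each partial minimization can only decrease the objective, $f\left(\mathbf{X}^{(n)},a^{(n)}\right)\ge f\left(\mathbf{X}^{(n+1)},a^{(n)}\right)\ge f\left(\mathbf{X}^{(n+1)},a^{(n+1)}\right)$, so the real sequence $\left\{f\left(\mathbf{X}^{(n)},a^{(n)}\right)\right\}$ is non-increasing. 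By hypothesis the iterates admit a limit point $\left(\mathbf{X}^\star,a^\star\right)$, the limit of a subsequence indexed by $n_k$; continuity of $f$ forces the value subsequence to converge to $f\left(\mathbf{X}^\star,a^\star\right)$, and a monotone real sequence possessing a convergent subsequence converges to the same limit, which I call $f^\star$. In particular every limit point attains the common value $f^\star$.

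Next I would recall that, since $f$ is continuously differentiable and $\mathcal{X},\mathcal{A}$ are convex, a feasible point is stationary exactly when it satisfies the variational inequalities $\mathrm{Re}\,\mathrm{tr}( (\nabla_{\mathbf{X}}f(\mathbf{X}^\star,a^\star))^{H}(\mathbf{X}-\mathbf{X}^\star) )\ge 0$ for all $\mathbf{X}\in\mathcal{X}$ and $\nabla_a f(\mathbf{X}^\star,a^\star)\,(a-a^\star)\ge 0$ for all $a\in\mathcal{A}$. The exact block minimizations supply, for every $k$, the inequality $\mathrm{Re}\,\mathrm{tr}( (\nabla_{\mathbf{X}}f(\mathbf{X}^{(n_k+1)},a^{(n_k)}))^{H}(\mathbf{X}-\mathbf{X}^{(n_k+1)}) )\ge 0$ for all $\mathbf{X}\in\mathcal{X}$. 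To pass this to the limit I must show that the auxiliary iterate $\mathbf{X}^{(n_k+1)}$ also tends to $\mathbf{X}^\star$. Here I would use the descent-gap argument: passing to a further subsequence, suppose $\mathbf{X}^{(n_k+1)}\to\tilde{\mathbf{X}}$; by optimality $f(\tilde{\mathbf{X}},a^\star)\le f(\mathbf{X}^\star,a^\star)=f^\star$, while monotone convergence of the whole value sequence to $f^\star$ forces $f(\tilde{\mathbf{X}},a^\star)=f^\star$, so $\tilde{\mathbf{X}}$ is itself a block-$\mathbf{X}$ minimizer at $a^\star$ achieving the same value as $\mathbf{X}^\star$; uniqueness of the single-block minimizer then gives $\tilde{\mathbf{X}}=\mathbf{X}^\star$. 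With $\mathbf{X}^{(n_k+1)}\to\mathbf{X}^\star$ and $a^{(n_k)}\to a^\star$ established, continuity of $\nabla f$ lets me take $k\to\infty$ in the variational inequality to obtain block-$\mathbf{X}$ stationarity, and the symmetric argument on the $a$-update yields block-$a$ stationarity.

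Finally, because the feasible region is the Cartesian product $\mathcal{X}\times\mathcal{A}$, any feasible direction splits into independent components $(\mathbf{X}-\mathbf{X}^\star,\,a-a^\star)$, so the two block-wise variational inequalities simply add to give $\mathrm{Re}\,\mathrm{tr}( (\nabla_{\mathbf{X}}f(\mathbf{X}^\star,a^\star))^{H}(\mathbf{X}-\mathbf{X}^\star) )+\nabla_a f(\mathbf{X}^\star,a^\star)\,(a-a^\star)\ge 0$ for all $(\mathbf{X},a)\in\mathcal{X}\times\mathcal{A}$, which is precisely the joint stationarity condition. I expect the main obstacle to be exactly the convergence of the auxiliary iterates $\mathbf{X}^{(n_k+1)}$ to $\mathbf{X}^\star$: without a uniqueness (or strict-convexity) guarantee for each single-block minimizer, the limit of the auxiliary subsequence need not coincide with $\mathbf{X}^\star$, and the variational inequality cannot be carried to the limit. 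This is the step where multi-block coordinate descent famously fails in general, and it is the product structure together with exact and essentially unique block minimization that rescues the two-block case; I would therefore make the uniqueness of each block minimizer along the trajectory the explicit regularity hypothesis underpinning the limit-passing step.
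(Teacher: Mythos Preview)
The paper does not prove this corollary at all; it is quoted verbatim from Grippo and Sciandrone (reference \cite{convergence} in the paper) and invoked as a black box inside the proof of Proposition~2. There is therefore no ``paper's own proof'' to compare your proposal against.

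That said, your argument carries a genuine gap that you yourself flag at the end: to pass the $\mathbf{X}$-block variational inequality to the limit you require $\mathbf{X}^{(n_k+1)}\to\mathbf{X}^\star$, and you obtain this only by assuming each single-block minimizer is unique. The corollary's hypotheses contain no such uniqueness or strict-convexity assumption, and adding it changes (weakens) the statement you are asked to prove. The standard two-block argument avoids this step entirely. From
$f(\mathbf{X}^{(n_k)},a^{(n_k)})\ge f(\mathbf{X}^{(n_k+1)},a^{(n_k)})\ge f(\mathbf{X}^{(n_k+1)},a^{(n_k+1)})$
and the convergence of the whole value sequence to $f^\star$, the sandwiched term also tends to $f^\star$. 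Combine this with the exact-minimization inequality $f(\mathbf{X}^{(n_k+1)},a^{(n_k)})\le f(\mathbf{X},a^{(n_k)})$ for any fixed $\mathbf{X}\in\mathcal{X}$ and let $k\to\infty$ using only continuity of $f$ in its second argument: you get $f^\star\le f(\mathbf{X},a^\star)$, so $\mathbf{X}^\star$ already minimizes $f(\cdot,a^\star)$ over $\mathcal{X}$, and the $\mathbf{X}$-block variational inequality follows without ever knowing where $\mathbf{X}^{(n_k+1)}$ accumulates. The $a$-block is even simpler, because $a^{(n_k)}$ is itself the minimizer of $f(\mathbf{X}^{(n_k)},\cdot)$ and $(\mathbf{X}^{(n_k)},a^{(n_k)})\to(\mathbf{X}^\star,a^\star)$ directly. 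Hence the ``main obstacle'' you anticipate is an artifact of the chosen proof route; the extra regularity hypothesis should be dropped and the limit-of-values argument used instead.
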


We have used matrix inequality for transforming the problem (\ref{ReformulatedProblem1}) into the problem (\ref{Alternatingoptimization}), equivalently, Therefore, if we prove that
the stationary point of the problem (\ref{ReformulatedProblem1}) is the KKT point of the problem (\ref{worstcaseDesign}), and the same as the stationary point of the problem (\ref{Alternatingoptimization}).
The objective function of the problem (\ref{Alternatingoptimization}) is continuously differentiable, and the feasible sets are closed and convex.
Since the objective function of the problem (\ref{Alternatingoptimization}) is nondecreasing in Algorithm 1,
due to the power constraint,
by optimizing the two variable sets: $\left(a_1,a_2.\mathbf{W}_E\right)$ and $\left(\alpha,\beta,\gamma,\mathbf{Q},\mathbf{\Omega}\right)$ alternately,
two variable sets have limit points.
Using Bolzano-Weierstrass theorem, we can conclude that $\left\{\bar{a}_1(l),\bar{a}_2(l),\bar{\alpha}(l),\bar{\beta}(l),\bar{\gamma}(l),\bar{\mathbf{W}}_E(l),\bar{\mathbf{Q}}(l),\bar{\mathbf{\Omega}}(l)\right\}$ has limit points.
Therefore, with   \cite[Corollary 2]{convergence}, the limit points generated by Algorithm 1, i.e., $\left\{\bar{a}_1^*,\bar{a}_2^*,\bar{\alpha}^*,\bar{\beta}^*,\bar{\gamma}^*,\bar{\mathbf{W}}_E^*,\bar{\mathbf{Q}}^*,\bar{\mathbf{\Omega}}^*\right\}$ is a stationary point of the problem (\ref{Alternatingoptimization}), and
$\left\{\bar{a}_1^*,\bar{a}_2^*,\bar{\mathbf{W}}_E^*,\bar{\mathbf{Q}}^*,\bar{\mathbf{\Omega}}^*\right\}$ is
a stationary point of the problem (\ref{ReformulatedProblem1}).

Next, we will show that the stationary point of the problem (\ref{ReformulatedProblem1}) is a KKT point of the problem (\ref{worstcaseDesign}).

Since with Algorithm 2, we can construct $\mathbf{Q}$ whose has one-rank without performance deterioration, the constraint $\mathrm{rank}\left(\mathbf{Q}\right)=1$ of the problem (\ref{worstcaseDesign}) and (\ref{ReformulatedProblem1}) can be omitted.
For brevity, we denote the objective function of the problem (\ref{worstcaseDesign}) as $\varphi\left(\mathbf{Q},\mathbf{\Omega}\right)$.
Since $\left\{\bar{a}_1^*,\bar{a}_2^*,\bar{\alpha}^*,\bar{\beta}^*,\bar{\gamma}^*,\bar{\mathbf{W}}_E^*,\bar{\mathbf{Q}}^*,\bar{\mathbf{\Omega}}^*\right\}$ is a stationary point of the problem (\ref{ReformulatedProblem1}), we have
\begin{align}
&\mathrm{tr}\left(\nabla_{\mathbf{Q}}\Xi\left(a_1^*,a_2^*,\mathbf{W}_E^*,\mathbf{Q}^*,\mathbf{\Omega}^*\right)^H
\left(\mathbf{Q}-\mathbf{Q}^*\right)\right)\leq 0,\nonumber\\
&\mathrm{tr}\left(\mathbf{Q}+\mathbf{\Omega}\right)\leq P_{tot}\nonumber\\
&\mathrm{tr}\left(\nabla_{\mathbf{\Omega}}\Xi\left(a_1^*,a_2^*,\mathbf{W}_E^*,\mathbf{Q}^*,\mathbf{\Omega}^*\right)^H
\left(\mathbf{\Omega}-\mathbf{\Omega}^*\right)\right)\leq 0,
\nonumber\\
&\mathrm{tr}\left(\mathbf{Q}+\mathbf{\Omega}\right)\leq P_{tot}\nonumber\\
&\mathrm{tr}\left(\nabla_{a_1}\Xi\left(a_1^*,a_2^*,\mathbf{W}_E^*,\mathbf{Q}^*,\mathbf{\Omega}^*\right)^H
\left(a_1-a_1^*\right)\right)\leq 0,a_1>0\nonumber\\
&\mathrm{tr}\left(\nabla_{a_2}\Xi\left(a_1^*,a_2^*,\mathbf{W}_E^*,\mathbf{Q}^*,\mathbf{\Omega}^*\right)^H
\left(a_2-a_2^*\right)\right)\leq 0,a_2>0\label{DestinationResult}
\end{align}
From the equivalent relation (\ref{equivalentequation1})-(\ref{equivalentequation3}), we can obtain
\begin{align}
&\mathbf{W}_E^*=\left(\mathbf{I}_{N_E}+\mathbf{H}_E\left(\mathbf{v}\mathbf{v}^H+\mathbf{\Omega}\right)\mathbf{H}_E^H+P_t\mathbf{g}_E\mathbf{g}_E^H\right)^{-1}
\label{result1}\\
&a_1=\left(P_t|g|^2+\mathbf{h}_B^H\mathbf{\Omega}\mathbf{h}_B+1\right)^{-1}
\label{result2}\\
&a_2=\left(1+\mathbf{r}^H\mathbf{H}_L\left(\mathbf{v}\mathbf{v}^H+\mathbf{\Omega}\right)\mathbf{H}_L^H\mathbf{r}\right)^{-1}
\label{result3}
\end{align}
Substituting (\ref{result1})-(\ref{result3}) into (\ref{DestinationResult}), we can obtain
\begin{align}
&\nabla_{\mathbf{Q}}\Xi\left(a_1^*,a_2^*,\mathbf{W}_E^*,\mathbf{Q}^*,\mathbf{\Omega}^*\right)
=\nabla_{\mathbf{Q}}\varphi\left(\mathbf{Q}^*,\mathbf{\Omega}^*\right)\nonumber\\
&\nabla_{\mathbf{\Omega}}\Xi\left(a_1^*,a_2^*,\mathbf{W}_E^*,\mathbf{Q}^*,\mathbf{\Omega}^*\right)
=\nabla_{\mathbf{\Omega}}\varphi\left(\mathbf{Q}^*,\mathbf{\Omega}^*\right).
\end{align}
Therefore, with the equation above, we can further obtain
\begin{align}
&\mathrm{tr}\left(\nabla_{\mathbf{Q}}\varphi\left(\mathbf{Q}^*,\mathbf{\Omega}^*\right)^H
\left(\mathbf{Q}-\mathbf{Q}^*\right)\right)\leq 0,\mathrm{tr}\left(\mathbf{Q}+\mathbf{\Omega}\right)\leq P_{tot}\nonumber\\
&\mathrm{tr}\left(\nabla_{\mathbf{\Omega}}\varphi\left(\mathbf{Q}^*,\mathbf{\Omega}^*\right)^H
\left(\mathbf{\Omega}-\mathbf{\Omega}^*\right)\right)\leq 0,\mathrm{tr}\left(\mathbf{Q}+\mathbf{\Omega}\right)\leq P_{tot}.
\end{align}
Then, we can conclude that $\mathbf{Q}^*,\mathbf{\Omega}^*$ is the optimal solution of the following problem:
\begin{align}
\max_{\mathbf{Q},\mathbf{\Omega}}\mathrm{tr}\left(\nabla_{\mathbf{Q}}\varphi\left(\mathbf{Q}^*,\mathbf{\Omega}^*\right)^H
\left(\mathbf{Q}-\mathbf{Q}^*\right)\right), s.t.\mathrm{tr}\left(\mathbf{Q}+\mathbf{\Omega}\right)\leq P_{tot}
\label{NewAddedProblem}
\end{align}

Hence, $\mathbf{Q}^*,\mathbf{\Omega}^*$ should satisfy the KKT conditions of the problem (\ref{NewAddedProblem}), i.e.,
\begin{align}
&\nabla_{\mathbf{Q}}\varphi\left(\mathbf{Q}^*,\mathbf{\Omega}^*\right)-\lambda_{\mathbf{Q}}\mathbf{I}+
\mathbf{D}_{\mathbf{Q}}=\mathbf{0}\nonumber\\
&\mathbf{Q}^*\mathbf{D}_{\mathbf{Q}}=\mathbf{0},\lambda_{\mathbf{Q}}\geq 0,\mathbf{D}_{\mathbf{Q}}\succeq\mathbf{0}\nonumber\\
&\nabla_{\mathbf{\Omega}}\varphi\left(\mathbf{Q}^*,\mathbf{\Omega}^*\right)-\lambda_{\mathbf{\Omega}}\mathbf{I}+
\mathbf{D}_{\mathbf{\Omega}}=\mathbf{0}\nonumber\\
&\mathbf{Q}^*\mathbf{D}_{\mathbf{\Omega}}=\mathbf{0},\lambda_{\mathbf{\Omega}}\geq 0,\mathbf{D}_{\mathbf{Q}}\succeq\mathbf{0},
\label{KKTcondition}
\end{align}
where $\mathbf{D}_{\mathbf{\Omega}}$,  $\lambda_{\mathbf{\Omega}}$, $\mathbf{D}_{\mathbf{Q}}$,  and $\lambda_{\mathbf{Q}}$ are Lagrangian multipliers. The condition (\ref{KKTcondition}) is exactly the KKT codition of the problem (7).

%

\ifCLASSOPTIONcaptionsoff
  \newpage
\fi



%

\bibliographystyle{IEEEtran}

\begin{IEEEbiography}[{\includegraphics[width=1in,height=1.25in,clip,keepaspectratio]{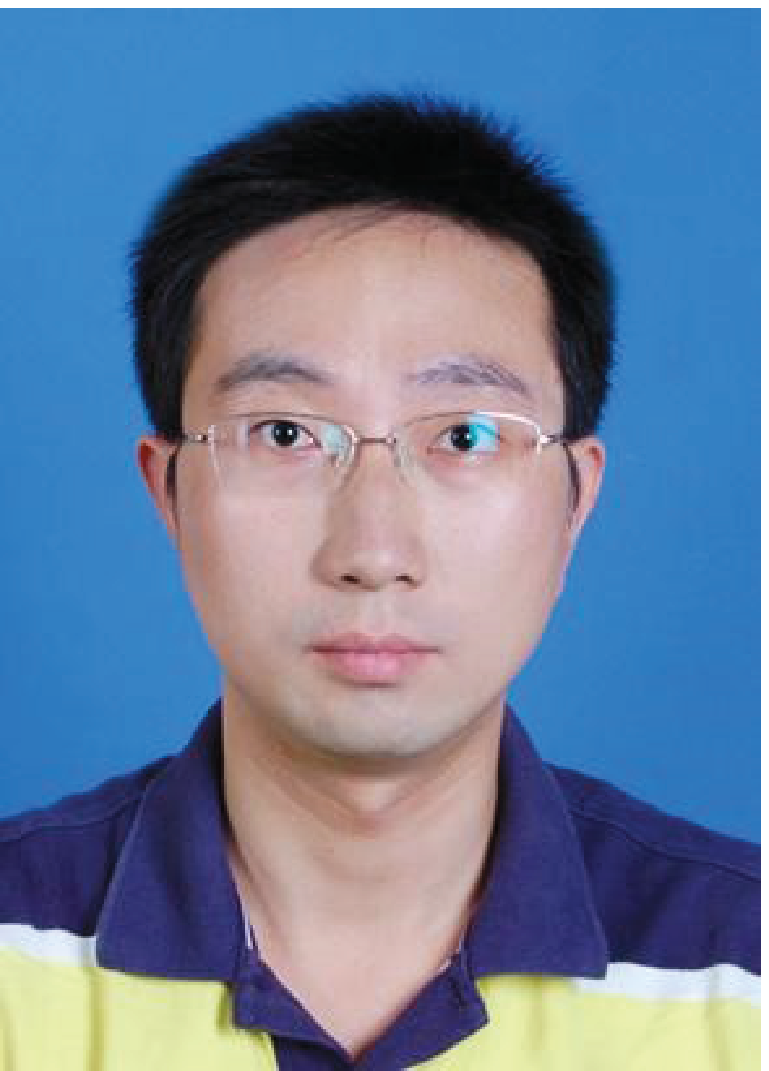}}] {Zhengmin Kong} received the B.Eng. and Ph.D. degrees from the School of Electronic Information and Communications, Huazhong University of Science and Technology, Wuhan, China, in 2003 and 2011, respectively. From 2005 to 2011, he was with the Wuhan National Laboratory for Optoelectronics as a member of the Research Staff, and was involved in Beyond-3G and UWB system design. He is currently an Associate Professor with the School of Electrical Engineering and Automation, Wuhan University, Wuhan, China. From 2014 to 2015, he was with the University of Southampton, U.K., as an Academic Visitor, and investigated physical layer security and interference management techniques. His current research interests include wireless communications, smart grid and signal processing.
\end{IEEEbiography}

\begin{IEEEbiography}[{\includegraphics[width=1in,height=1.25in,clip,keepaspectratio]{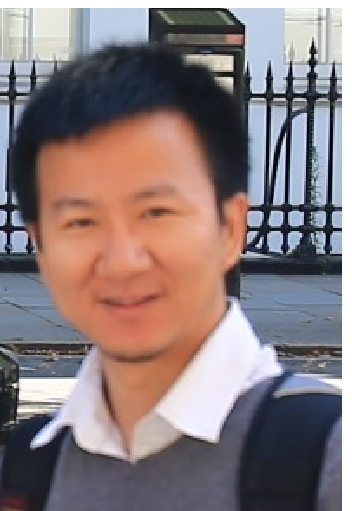}}] {Shaoshi Yang} (S'09-M'13-SM'19) received the B.Eng. degree in information engineering from Beijing University of Posts and Telecommunications (BUPT), China, in 2006 and the Ph.D. degree in electronics and electrical engineering from University of Southampton, U.K., in 2013. From 2008 to 2009, he was involved in the Mobile WiMAX standardization with Intel Labs China. From 2013 to 2016, he was a Research Fellow with the School of Electronics and Computer Science, University of Southampton. From 2016 to 2018, he was a Principal Engineer with Huawei Technologies Co., Ltd., where he led the company?s research efforts in wireless video/VR transmission. Currently, he is a Full Professor at BUPT. His research interests include large-scale MIMO signal processing, millimetre wave communications, wireless AI and wireless video/VR in 5G and beyond. He received the Dean?s Award for Early Career Research Excellence from the University of Southampton, and the President Award of Wireless Innovations from Huawei. His research excellence was also recognized by the National Thousand-Young-Talent Fellowship. He is a member of the Isaac Newton Institute for Mathematical Sciences, Cambridge University, and an Editor for IEEE Wireless Communications Letters. He was also an Associate Editor for IEEE Journal on Selected Areas in Communications, and an invited international reviewer for the Austrian Science Fund (FWF). (http://shaoshiyang.weebly.com/) 
\end{IEEEbiography}

\begin{IEEEbiography}[{\includegraphics[width=1in,height=1.25in,clip,keepaspectratio]{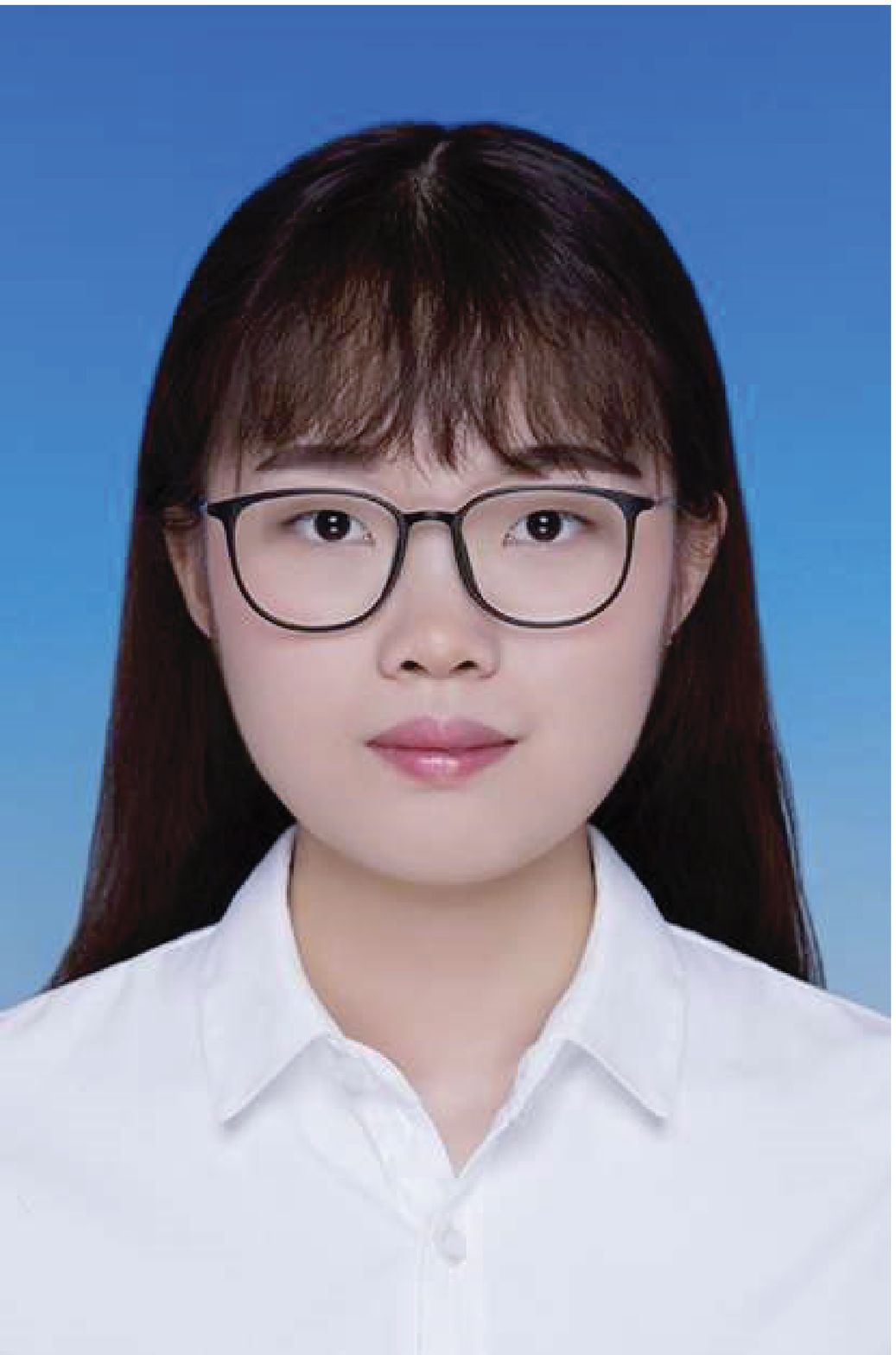}}] {Die Wang} received the B.S. degree in Automation from Wuhan University, Wuhan, China, in 2018. She is currently pursuing the M.S. degree with the School of Electrical Engineering and Automation, Wuhan University, Wuhan, China. Her current research interests include wireless communications and signal processing, interference management schemes in MIMO interference channels, capacity analysis in multiuser communication systems, physical layer security and full duplex.
\end{IEEEbiography}

\begin{IEEEbiography}[{\includegraphics[width=1in,height=1.25in,clip,keepaspectratio]{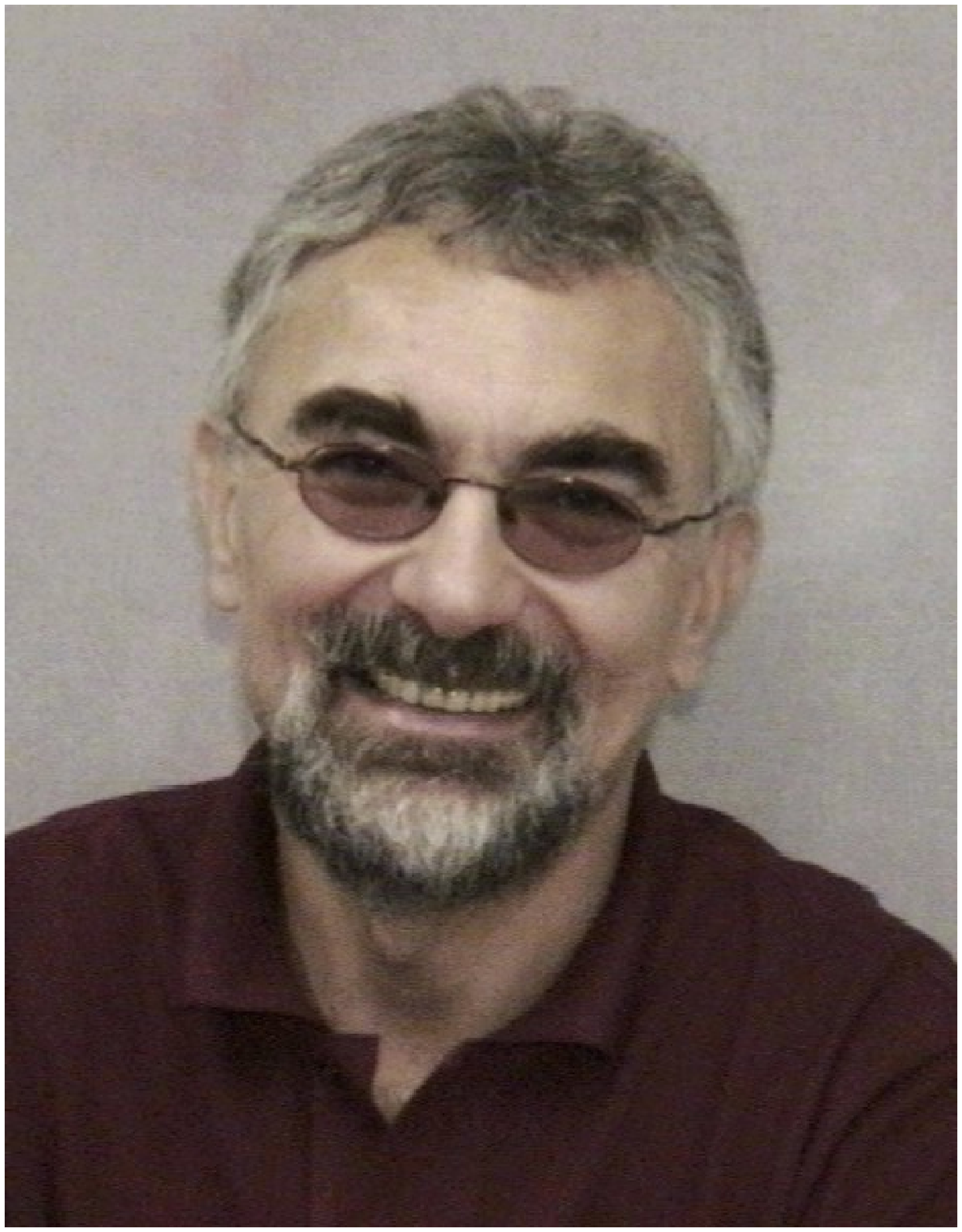}}] {Lajos Hanzo}
(M'91-SM'92-F'04) FREng, FIET, Fellow of EURASIP, received his 5-year degree in electronics in 1976 and his doctorate in 1983 from the Technical University of Budapest.  In 2009 he was awarded an honorary doctorate by the Technical University of Budapest and in 2015 by the University of Edinburgh.  In 2016 he was admitted to the Hungarian Academy of Science. During his 40-year career in telecommunications he has held various research and academic posts in Hungary, Germany and the
UK. Since 1986 he has been with the School of Electronics and Computer
Science, University of Southampton, UK, where he holds the chair in
telecommunications.  He has successfully supervised 119 PhD students,
co-authored 18 John Wiley/IEEE Press books on mobile radio
communications totalling in excess of 10 000 pages, published 1800+
research contributions at IEEE Xplore, acted both as TPC and General
Chair of IEEE conferences, presented keynote lectures and has been
awarded a number of distinctions. Currently he is directing a
60-strong academic research team, working on a range of research
projects in the field of wireless multimedia communications sponsored
by industry, the Engineering and Physical Sciences Research Council
(EPSRC) UK, the European Research Council's Advanced Fellow Grant and
the Royal Society's Wolfson Research Merit Award.  He is an
enthusiastic supporter of industrial and academic liaison and he
offers a range of industrial courses.  He is also a Governor of the
IEEE ComSoc and VTS.  He is a former Editor-in-Chief of
the IEEE Press and a former Chair Professor also at Tsinghua University,
Beijing.  For further information on research in progress and
associated publications please refer to
http://www-mobile.ecs.soton.ac.uk 
\end{IEEEbiography}

\end{document}